\newcommand{\ar}{\renewcommand{\arraystretch}{1}} 
\DeclareMathAlphabet{\bb}{U}{msb}{m}{n} \gdef\C{\bb C} \gdef\dZ{\bb
Z}   \gdef\dS{\bb S} \gdef\R{\bb R}
\gdef\K{\bb K} \gdef\BH{\bb H} \gdef\F{\bb F} \gdef\dO{\bb O}
\DeclareMathOperator{\End}{End} \DeclareMathOperator{\spin}{{\bf
Spin}} 
\DeclareMathOperator{\fD}{\mathfrak{D}}
\DeclareMathOperator{\Id}{Id} 
\DeclareMathOperator{\Ker}{Ker} 
\DeclareMathOperator{\Sym}{Sym} 
\DeclareMathOperator{\Ext}{Ext} \DeclareMathOperator{\Mat}{Mat}
\DeclareMathOperator{\Tr}{Tr} \DeclareMathOperator{\SL}{SL}
\DeclareMathOperator{\SO}{SO}\DeclareMathOperator{\SU}{SU}
\DeclareMathOperator{\Sp}{Sp} 
\DeclareMathOperator{\Hilb}{Hilb}
\newcommand{\bcirc}{\raisebox{0.5mm}{$\scriptstyle\bigcirc$}}
\newcommand{\cA}{\mathcal{A}}
\newcommand{\bcE}{\boldsymbol{\mathcal{E}}}
\newcommand{\bcK}{\boldsymbol{\mathcal{K}}}
\newcommand{\cP}{{\cal P}}
\newcommand{\cO}{{\cal O}}
\newcommand{\fB}{\mathfrak{B}}
\newcommand{\sA}{{\sf A}}
\newcommand{\sB}{{\sf B}}
\newcommand{\sH}{{\sf H}}
\newcommand{\sX}{{\sf X}}
\newcommand{\sY}{{\sf Y}}
\newcommand{\bsH}{{\boldsymbol{\sf H}}}
\newcommand{\bsT}{{\boldsymbol{\sf T}}}
\newcommand{\bsU}{{\boldsymbol{\sf U}}}
\newcommand{\bsZ}{{\boldsymbol{\sf Z}}}
\newcommand{\bx}{{\bf x}}
\newcommand{\bZ}{{\bf Z}}
\newcommand{\fA}{\mathfrak{A}}
\newcommand{\fa}{\mathfrak{a}}
\newcommand{\fb}{\mathfrak{b}}
\newcommand{\fh}{\mathfrak{h}}
\newcommand{\Lip}{\boldsymbol{\Gamma}}
\newcommand{\cl}{C\kern -0.2em \ell}
\newcommand{\e}{\mbox{\bf e}}
\newtheorem{thm}{Theorem}
\newtheorem{defn}{Definition}
\begin{document}
\title{Spinors in $\K$-Hilbert Spaces}
\author{V.~V. Varlamov\thanks{Siberian State Industrial University,
Kirova 42, Novokuznetsk 654007, Russia, e-mail:
varlamov@sibsiu.ru}}
\date{}
\maketitle
\begin{abstract}
We consider a structure of the $\K$-Hilbert space, where $\K\simeq\R$ is a field of real numbers, $\K\simeq\C$ is a field of complex numbers, $\K\simeq\BH$ is a quaternion algebra, within the framework of division rings of Clifford algebras. The $\K$-Hilbert space is generated by the Gelfand-Naimark-Segal construction, while the generating $C^\ast$-algebra consists of the energy operator $H$ and the generators of the group $\SU(2,2)$ attached to $H$. The cyclic vectors of the $\K$-Hilbert space corresponding to the tensor products of quaternionic algebras define the pure separable states of the operator algebra. Depending on the division ring $\K$, all states of the operator algebra are divided into three classes: 1) charged states with $\K\simeq\C$; 2) neutral states with $\K\simeq\BH$; 3) truly neutral states with $\K\simeq\R$. For pure separable states that define the fermionic and bosonic states  of the energy spectrum, the fusion, doubling (complexification) and annihilation operations are determined.
\end{abstract}
{\bf Keywords}: quaternions, spinors, Hilbert spaces, Clifford algebras, minimal left ideals, algebraic quantization

\section{Introduction}
\begin{flushright}
\begin{minipage}{18pc}{\small
No one fully understand spinors. Their algebra is formally understood but their geometrical significance is mysterious.

Michael Atiayh \cite[P.~430]{FG}
}\end{minipage}
\end{flushright}
The roots of Clifford algebra theory go back to Hamilton \textit{quaternions} \cite{Ham} and Grassmann {\it Ausdehnungslehre} \cite{Grass}. The algebra introduced by Clifford \cite{Cliff} is a generalization of the quaternion algebra to the case of multidimensional spaces. As a consequence of this generalization, a quaternion structure arises in the Clifford algebra, which is a tensor product of quaternion algebras, that is, a tensor product of four-dimensional algebras (the dimension of the quaternion algebra is 4). Studying the rotations of the $n$-dimensional Euclidean space $\R^n$, Lipschitz \cite{Lips} found that the group of rotations of the space $\R^n$ with the determinant $+1$ is represented by a \textit{spinor group}\footnote{For an $n$-dimensional pseudo-Euclidean space $\R^{p,q}$ ($n=p+q$), the Lipschitz group $\Lip_{p,q}=\left\{s\in\cl_{p,q}\;|\;\forall \bx\in\R^{p,q},\;
s\bx s^{-1}\in\R^{p,q}\right\}$ contains a spinor group $\spin(p,q)=\left\{s\in\Lip^+_{p,q}\;|\;N(s)=\pm 1\right\}$, where $\Lip^+_{p,q}=\Lip_{p,q}\cap\cl^+_{p,q}$, $\cl^+_{p,q}$ is an even subalgebra of the Clifford algebra $\cl_{p,q}$ of the space $\R^{p,q}$. The group $\spin(p,q)$ contains the subgroup $\spin_+(p,q)=\left\{s\in\spin(p,q)\;|\;N(s)=1\right\}$.}. As is known \cite{Roz55}, the groups of motions of $n$-dimensional non-Euclidean spaces $S^n$ are isomorphic to the groups of rotations of spaces $\R^{n+1}$. Since Clifford algebras are isomorphic to matrix algebras, \textit{spinor representations} of motions of spaces $S^n$ can be considered as representations of vectors in corresponding spaces. The vectors of these spaces are called \textit{spinors} of spaces $S^n$. The concept of spinor was introduced by Cartan \cite{Car13}. Van der Waerden notes \cite{Waer} that the name ``spinor'' was given by Ehrenfest with the appearance of the famous article of Uhlenbeck and Goudsmit \cite{UG} about a spinning electron. More precisely, the geometric meaning of the spinor representations of the motions of non-Euclidean spaces $S^n$ is that the coordinates of the spinors can be considered as the coordinates of the plane generators of the maximal dimension of the absolutes\footnote{An \textit{absolute} is a set of infinitely distant points of a non-Euclidean space.} of these spaces, and the spinor representations of the motion of these spaces coincide with those transformations of the spinors that correspond to the transformations of the absolutes during the movements. Thus, in a case important for physics, it is known that the connected group of motions of the three-dimensional non-Euclidean space $S^{1,2}$ (Lobachevsky space) is isomorphic to the connected group of rotations of the four-dimensional pseudo-Euclidean space $\R^{1,3}$ (Minkowski space-time), coinciding with the group of Lorentz transformations of special relativity. Therefore, the spinor representations of the connected group of motions of the space $S^{1,2}$ are at the same time spinor representations of the Lorentz group. It follows that each spinor of the space $S^{1,2}$ corresponds to some point of the absolute\footnote{The absolute of the Lobachevsky space $S^{1,2}$ is homeomorphic to the extended complex plane $\C\cup\{\infty\}$.} of the space $S^{1,2}$, and each point of the absolute of the space $S^{1,2}$ corresponds to an isotropic line of the space $\R^{1,3}$ passing through some point of this space. The described geometric interpretation of spinors and spinor representations was proposed by Cartan \cite{Car38} (see also \cite{Roz55}).

In his book \cite{Che54}, Chevalley notes that the construction of the concept of spinor given by Cartan was rather complicated, and a much simpler presentation of the theory, based on the use of Clifford algebras, was given by Brauer and Weyl in \cite{BW35}. In his presentation of the algebraic theory of spinors, Chevalley follows the article \cite{BW35}. Thus, along with the geometrical interpretation, an algebraic approach to the description of spinors and spinor representations appeared, which was further developed in \cite{Port,Cru,Lou}. The productivity and development of the algebraic approach shows that the spinor is primarily an object of algebraic nature. According to the algebraic definition, a spinor is an element of the minimal left ideal of the Clifford algebra\footnote{The first one to consider spinors as elements in a minimal left ideal of a Clifford algebra was M. Riesz \cite{Rie47}.} $\cl(V,Q)$, where $V$ is a vector space equipped with a non-degenerate quadratic form $Q$. For $n$ even, the minimal left ideal of $\cl(V,Q)$ corresponds to the \textit{maximal totally isotropic subspace}\footnote{A subspace $U$ of a space $V$ is called totally isotropic if the bilinear form $B(\alpha_i,\alpha_j)=0$ for all $\alpha_i,\alpha_j\in U$. A subspace $U\subset V$ of maximal dimension with the above property is called a maximal totally isotropic subspace.} $U\subset V$ of dimension $n/2$, i.e. it is isomorphic to a spinspace $\dS$ of dimension $2^{n/2}$ \cite{Abl01}.

Returning to the spinor representations of the Lorentz group, we see that the \textit{fundamental representation} of this group over the field of complex numbers $\F=\C$ acts in a two-dimensional spinspace, the vector of which is a two-dimensional spinor. And further, any finite-dimensional irreducible spinor representation of the Lorentz group can be factorized as a tensor product of two-dimensional fundamental representations. In turn, the spinspace of the fundamental representation is the minimal left ideal of the quaternion algebra (the biquaternion algebra $\C_2\simeq\C\otimes\BH$ in the case of the field $\F=\C$ and the quaternion algebras $\cl_{0,2}\simeq\BH$, $\cl_{1,1}\simeq\R(2)$, $\cl_{0,2}\simeq\R(2)$ in the case of the field $\F=\R$, while with the division ring $\K\simeq\BH$ for $\cl_{0,2}$ and $\K\simeq\R$ for $\cl_{1,1}$ and $\cl_{2,0}$). The elements of minimal left ideals of four-dimensional quaternionic algebras are two-component spinors. Here we have considered only one of the ways in which $\K$-structures (division algebras) arise in theoretical physics.

In a broader context, the algebraic formulation of quantum theory was first proposed by Jordan, von Neumann, and Wigner \cite{JNW34,Neu36}, as well as by Segal \cite{Seg47} (in terms of $C^\ast$-algebras). Subsequently, algebraic methods penetrated into quantum field theory and statistical mechanics \cite{Emh,BLOT,Hor86}. The authors of \cite{JNW34,Neu36} considered $\R$-, $\C$- and $\BH$-realizations of quantum theory on equal grounds. Historically, this was the first appearance of $\K$-structures in physics. In 1962, Dyson \cite{Dys62} proposed to consider all three $\K$-realizations of quantum theory ($\K=\R,\C,\BH$) within a single structure, which he called the ``threefold way''. More precisely, the Dyson threefold way describes how certain complex representations of groups can be seen as arising from real or quaternionic representations. And further, in 2012, Baez \cite{Baez} develops the threefold way within the framework of the theory of categories and $\K$-Hilbert spaces. The theory of $\K$-Hilbert spaces was previously proposed in \cite{Sol}. Another application of $\K$-structures in physics is related to the algebraic description of the standard model (Dixon algebra $\R\otimes\C\otimes\BH\otimes\dO$ \cite{Dix}, where $\dO$ is the octonion algebra\footnote{G\"{u}naydin and G\"{u}rsey \cite{GG74} considered the algebraic representation of the quark model in the framework of the octonion automorphism group $G_2$ ($G_2$ is a 14-dimensional exceptional Lie group containing $\SU(3)$ as a subgroup).}). The representation of discrete symmetries in the form of automorphisms of Clifford algebras was proposed in \cite{Var01,Var04,Var05,Var15} ($PT$- and $CPT$-groups). In this case, the $\K$-structure of the pseudo-automorphism of the Clifford algebra, which defines the charge conjugation $C$, plays an essential role: the division ring $\K\simeq\C$ corresponds to charged states (particles), $\K\simeq\BH$ to neutral states, and $\K\simeq\R$ corresponds to truly neutral (Majorana) states. The definition of fermionic and bosonic representations of the Lorentz group in terms of tensor products of two-dimensional spinspaces (minimal left ideals of quaternion algebras) and the action of the corresponding $CPT$ groups for higher spin fields was given in \cite{Var11,Var14}.

A turning point in the development of algebraic quantum theory was Haag's 1957 paper \cite{Haag}, in which the concept of the algebra of local observables $\fA(\cO)$ was first introduced. Haag's axiomatics is based on the principle of locality and the concept of a local observable, that is, an arbitrary physical quantity that can be measured experimentally in a limited region $\cO$ of space-time\footnote{There are two variants of the local algebraic approach: the concrete one (or \textit{Haag-Araki theory}), in which the local algebras are von Neumann algebras $\overline{\fA}(\cO)$ in some Hilbert space $\sH$, and the abstract one (or \textit{Haag-Kastler theory}), in which the local algebras are abstract $C^\ast$-algebras $\fA(\cO)$. The difference between these two variants is only meaningful from a constructive point of view. Namely, it may happen that the algebra of observables is constructed before its physical representation $\pi$ is chosen, then the abstract-algebraic point of view (Haag-Kastler theory) is preferable. When the physical representation $\pi$ is fixed, then the abstract $C^\ast$-algebras $\fA(\cO)$ can be considered as ``concrete'' von Neumann algebras $\overline{\pi(\fA(\cO))}$ (local observables defined in the weak operator topology of the physical representation).}. The main task of the local algebraic approach is a physically acceptable description of ``relativistic quantum objects'', in connection with which repeated attempts have been made to postulate the axiom of the corpuscular interpretation. In local quantum theory, the search for the criterion of corpuscular interpretation proceeds from the general interpretation of a particle as an ``asymptotically stable localization center''. However, as noted by Haag and Buchholz \cite{BH}, an adequate way to mathematically express such a concept of a particle has not yet been found. In this connection, the question naturally arises about the adequacy of the use of the term ``particle'' in relation to the quantum micro-object. With this question (``what is an elementary particle?'') Heisenberg \cite{Heisen}, Schr\"{o}dinger, Markov were asked. This term, more precisely, a visual spatial image, usually associated with the concept of a particle, is unconsciously translated to objects in microcosm, which, in general, do not obey the classical space-time description. As a result, there is a wide range of fuzzy and blurred ideas about the quantum micro-object. In a recent paper \cite{Wol20}, Wolchover summarizes these views. Let's list some of them: 1) a particle as a result of the collapse of the wave function; 2) a particle as a perturbation of the quantized field; 3) a particle as an irreducible representation of the group\footnote{This interpretation goes back to Wigner work \cite{Wig39}, in which an elementary particle is described by an irreducible \textit{representation}
of the Poincar\'{e} group $\cP$. On the other hand, in accordance with $\SU(3)$-theory, an elementary particle is described by the \textit{vector} of the irreducible representation of the group $\SU(3)$. For example, in the so-called Gell-Mann ``eightfold way'' \cite{GN64}, hadrons are represented by vectors of the eight-dimensional regular representation $\Sym^0_{(1,1)}$ of the group $\SU(3)$. In order to build a bridge between these interpretations (between \emph{representations} of the group $\cP$ and \emph{vectors} of the representations $\Sym^0_{(1,1)}$, $\Sym^0_{(1,4)}$, $\ldots$), we introduce a $\K$-Hilbert state space in the section 5, where each cyclic vector of this space defines an irreducible representation of the group $\SL(2,\C)$ (see also \cite{Var14,Var15,Var16}). In this approach, the concept of symmetry takes on a dominant role.}; 4) a particle as a string vibration; 5) a particle as a deformation of the information ocean\footnote{Associations of the oceanic plane often arise when it comes to the nature of the fundamental substance that fills the Universe, here it is enough to recall the \textit{Dirac Sea} or the \textit{neutrino sea} of Pontecorvo-Smorodinsky \cite{PS61}.} (Wheeler's ``it from bit''). The latter interpretation goes back to the Ur-hypothesis of von Weizs\"{a}cker \cite{Wa55,Wa92}, as well as to the hypothesis of the space-time code of Finkelstein \cite{Fin}. 
The mathematical structure of the information bit is identical to the two-component spinor. In turn, the two-component spinor describes the neutrino. 

In this paper, we consider an algebraic formulation of a quantum theory with a binary structure. Following Heisenberg \cite{Heisen51}, the main observable at the fundamental level (microlevel) is the energy that corresponds to the Hermitian operator $H$. The $C^\ast$-algebra $\fA$ of observables consists of the energy operator $H$ and the generators of the group $\SU(2,2)$ attached to $H$ (the twofold covering of the conformal group $\SO_0(2,4)$), which form a general system of eigenfunctions with $H$. The spectrum of states (the spectrum of matter) is generated by the Gelfand-Naimark-Segal construction. The pure separable states of the matter spectrum are given by the cyclic vectors of the $\K$-Hilbert space, where $\K=\R,\C,\BH$. According to the mass spectrum, the pure separable states $\omega$ form a physical $\K$-Hilbert space $\bsH_{\rm phys}(\K)$, where each cyclic vector $\left|\psi\right\rangle\in\bsH_{\rm phys}(\K)$ corresponds to a certain observed state (``particle''). In section 6, we introduce the coherent subspaces $\bsH^{(b,\ell)}_{\rm phys}(\K)$ of the physical $\K$-Hilbert space $\bsH_{\rm phys}(\K)$, where $b$ and $\ell$ are the baryon and lepton numbers. \textit{Algebraic quantization}, implemented by the GNS construction within the framework of the concept of cyclic vectors of the $\K$-Hilbert space and the \textit{energy interpretation} of the $C^\ast$-algebra, naturally leads to an understanding of the state (particle) as a \textit{\textbf{quantum of energy}}. In this case, the \textit{minimal energy quantum} corresponds to the fundamental state of the $C^\ast$-algebra (the minimal left ideal of the quaternion algebra), and the tensor products of the fundamental states form \textit{fermionic states} for an odd number of cofactors and \textit{bosonic states} for an even number. In section 7, it is shown that all states (fermionic and bosonic) from $\K$-subspaces ($\K=\R,\C,\BH$) are derived structures that are obtained from the fundamental states through fusion and doubling operations.

\section{$C^\ast$-algebras of Observables and Gelfand-Naimark-Segal Construction}
In this section, we will briefly consider the main definitions concerning the theory of $C^\ast$-algebras (see also \cite{Emh,BLOT,Hor86,BF16}). It is known that \textit{any quantum system} is characterized by a set of observational data that can be obtained as a result of the corresponding measurement process. The physical quantities obtained as a result of the measurement are \textit{observables} of the quantum system. The set of observables forms an algebra $\fA$, in which the operation of multiplying the observables is defined and their linear superpositions are given. In general, an algebra $\fA$ is an associative noncommutative $C^\ast$-algebra with unity over a field of complex numbers $\F=\C$. Further, the algebra $\fA$ is endowed with a conjugation operation, that is, there is an anti-linear involution $\ast:\fA\rightarrow\fA$ such that $(\fa^\ast)^\ast=\fa$ for any element $\fa\in\fA$. The norm $\|\cdot\|$ on $\fA$ is defined as follows: for any $\fa,\fb\in\fA$ the inequality $\|\fa\fb\|\leq\|\fa\|\|\fb\|$ is true, as well as $\|\fa^\ast\fa\|=\|\fa\|^2$, that is, $\|\fa^\ast\|=\|\fa\|$.

In the case of an $n$-level quantum system, the algebra $\fA$ can be identified with the $C^\ast$-algebra $\Mat_n(\C)$ of complex $n\times n$ matrices. In this case, $\ast$-operation coincides with the Hermitean conjugation $M^\ast=M^\dag$ for any element $M\in\Mat_n(\C)$, and the norm $\|M\|$ is given by the largest eigenvalue of the product $M^\dag M$. In the case of an infinite number of degrees of freedom, the $C^\ast$-algebra is the algebra $\fB(\sH)$ of all bounded operators on an infinite-dimensional Hilbert space $\sH_\infty$ ($\sH_\infty$ is a Banach space with a countable base that is everywhere dense in $\sH_\infty$).

The explicit relationship between the algebra $\fA$ and measurement data is given by the concept of state $\omega$, by which the expected value $\omega(\fa)$ of the observable $\fa\in\fA$ can be determined. A state $\omega$ on the $C^\ast$-algebra $\fA$ is a linear map $\omega:\fA\rightarrow\C$ that is positive, $\omega(\fa^\ast\fa)\geq 0$, $\forall\fa\in\fA$, and normed, $\omega(1_\fA)=1$, where $1_\fA$ is the unit of the algebra $\fA$. The map $\omega$ is continuous: $|\omega(\fa)|\leq\|\fa\|$, $\forall\fa\in\fA$. Hence, the state $\omega$ is a positive normed functional over the algebra $\fA$. The set of all states of the algebra $\fA$ will be denoted by $\Omega(\fA)$. The value $\omega(\fa)$ at $\fa=\fa^\ast$ is understood as the average of the observable $\fa$ in the state $\omega$. A general definition of the state of a quantum system can be given in terms of normed density matrices on the Hilbert space $\sH_\infty$. Indeed, any \textit{density matrix} $\rho$ defines the state $\omega_\rho$ on the algebra $\fB(\sH)$ by means of the relation
\[
\omega_\rho(\fa)=\Tr\left[\rho\fa\right],\quad\forall\fa\in\fB(\sH),
\]
which for pure states, $\rho=|\psi\rangle\langle\psi|$, is reduced to the standard expectation $\omega_\rho(\fa)=\langle\psi|\fa|\psi\rangle$.

A set $\Omega(\fA)$ is \textit{convex} if for any states $\omega_1$, $\omega_2$ and $\lambda_1,\,\lambda_2\geq 0$, $\lambda_1+\lambda_2=1$ there is $\lambda_1\omega_1+\lambda_2\omega_2\in\Omega(\fA)$. A state $\omega$ on the $C^\ast$-algebra $\fA$ is called \textit{pure} if it cannot be decomposed into a convex sum of two states, i.e. if the decomposition $\omega=\lambda\omega_1+(1-\lambda)\omega_2$, where $0<\lambda<1$, is performed only for $\omega_1=\omega_2=\omega$. Pure states are extreme points of the set $\Omega(\fA)$. A state $\omega$ that is not pure is called \textit{mixed}.

One of the most important aspects of $C^\ast$-algebra theory is the duality between states and representations of the algebra of observables. The relation between states $\omega$ and irreducible representations $\pi$ of algebras $\fA$ was first explicitly stated by Segal \cite{Seg47}. The canonical correspondence $\omega\leftrightarrow\pi_\omega$ between states and cyclic representations of the $C^\ast$-algebra $\fA$ is given by GNS (Gelfand-Naimark-Segal) construction.
\begin{thm}
{\rm (\textrm{GNS construction} \cite{GN43,Seg47})} For any state $\omega$ (positive functional) on a  $C^\ast$-algebra $\fA$, one can define a cyclic representation $\pi_\omega$ of the algebra $\fA$ in a Hilbert space $\sH$ with a cyclic vector $\left|\Phi\right\rangle$ such that
\[
\omega(\fa)=\langle\Phi\mid\pi_\omega(\fa)\mid\Phi\rangle,\quad\forall\fa\in\fA.
\]
The representation $\pi_\omega$ is defined uniquely by these conditions up to unitary equivalence (correlating cyclic vectors of different representations).
\end{thm}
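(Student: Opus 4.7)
The plan is to carry out the standard Gelfand--Naimark--Segal construction, building the Hilbert space and representation directly from the positive functional $\omega$, and then verifying uniqueness via an intertwining isometry. The only genuinely technical step is the boundedness of the representation, which relies on a Cauchy--Schwarz estimate combined with the $C^\ast$-identity.

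First I would use $\omega$ to equip the algebra $\fA$, viewed as a complex vector space, with the sesquilinear form
\[
\langle \fa, \fb \rangle_\omega := \omega(\fa^\ast \fb).
\]
Positivity of $\omega$ makes this form positive semidefinite, and the Cauchy--Schwarz inequality $|\omega(\fa^\ast \fb)|^2 \leq \omega(\fa^\ast \fa)\,\omega(\fb^\ast \fb)$ follows in the usual way. The degenerate subspace
\[
\cN = \{ \fa \in \fA : \omega(\fa^\ast \fa) = 0 \}
\]
is then shown to be a closed left ideal of $\fA$: closedness follows from continuity of $\omega$, and the left-ideal property uses Cauchy--Schwarz applied to $\omega((\fb\fa)^\ast(\fb\fa)) = \omega(\fa^\ast \fb^\ast \fb \fa)$ together with the $C^\ast$-bound $\fb^\ast \fb \leq \|\fb\|^2 \cdot 1_\fA$. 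Passing to the quotient $\fA / \cN$ gives a genuine pre-Hilbert space, whose completion I call $\sH$.

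Next I would define the representation by left multiplication on the dense subspace $\fA/\cN$:
\[
\pi_\omega(\fa)\, [\fb] := [\fa \fb].
\]
Well-definedness is exactly the left-ideal property of $\cN$. The crucial boundedness estimate is
\[
\|\pi_\omega(\fa) [\fb]\|^2 = \omega(\fb^\ast \fa^\ast \fa \fb) \leq \|\fa^\ast \fa\|\, \omega(\fb^\ast \fb) = \|\fa\|^2\, \|[\fb]\|^2,
\]
where the middle inequality uses that $\|\fa^\ast \fa\| \cdot 1_\fA - \fa^\ast \fa$ is a positive element of the $C^\ast$-algebra and that $\omega$ is positive; this is the step I expect to require the most care, as it invokes both the order structure of $\fA$ and the $C^\ast$-axiom $\|\fa^\ast \fa\| = \|\fa\|^2$. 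Boundedness allows me to extend each $\pi_\omega(\fa)$ by continuity to the whole of $\sH$, and linearity, multiplicativity and the $\ast$-property are inherited from $\fA$ on the dense subspace.

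Finally I would set $|\Phi\rangle := [1_\fA]$. Cyclicity is immediate since $\pi_\omega(\fa)|\Phi\rangle = [\fa]$ ranges over a dense subset of $\sH$, and the defining identity is the direct computation
\[
\langle \Phi \mid \pi_\omega(\fa) \mid \Phi \rangle = \omega(1_\fA^\ast\, \fa\, 1_\fA) = \omega(\fa).
\]
For uniqueness, given another cyclic triple $(\pi', \sH', |\Phi'\rangle)$ representing the same $\omega$, I would define an operator $U : \sH \to \sH'$ by $U\,\pi_\omega(\fa)|\Phi\rangle := \pi'(\fa)|\Phi'\rangle$; both cyclic vectors yield the same inner products via $\omega$, so $U$ is well-defined and isometric on a dense subspace, hence extends to a unitary, and by construction it intertwines the two representations and maps $|\Phi\rangle$ to $|\Phi'\rangle$. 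This yields the unitary equivalence asserted in the statement.
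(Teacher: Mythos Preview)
Your proof is correct and is the standard GNS construction argument. The paper, however, does not supply its own proof of this theorem: it is stated as a classical result with citations to Gelfand--Naimark and Segal, and is then used as a tool in the subsequent development. So there is nothing in the paper to compare your argument against; what you have written is precisely the textbook proof one would expect, and it is fully consistent with the paper's setting (in particular, the paper explicitly assumes $\fA$ is unital, so your choice $|\Phi\rangle = [1_\fA]$ is justified without further discussion of approximate units).
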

It follows from the theorem that the concept of a Hilbert space associated with a quantum system is not a primary concept, but an emergent construction, i.e. a consequence of the structure of the $C^\ast$-algebra of the system of observables. Further, each state $\omega$ defines some representation of the algebra $\fA$, and the resulting representation $\pi_\omega$ is irreducible exactly when the state $\omega$ is pure. It is also follows from the theorem that for any non-zero vector $\left|\Phi\right\rangle\in\sH$, the expression
\begin{equation}\label{VectState}
\omega_\Phi(\fa)=\frac{\langle\Phi\mid\pi_\omega(\fa)\mid\Phi\rangle}{\langle\Phi\mid\Phi\rangle},\quad\forall\fa\in\fA,
\end{equation}
defines the state $\omega_\Phi(\fa)$ of the algebra $\fA$, called the \textit{vector state} associated with the representation $\pi_\omega$ and corresponding to the cyclic\footnote{A vector $\left|\Phi\right\rangle\in\sH$ is called cyclic for a representation $\pi$ if all vectors of the form $\pi(\fa)\left|\Phi\right\rangle$ (where $\fa\in\fA$) form a total set in $\sH$, i.e. a set whose closure of the linear shell is everywhere dense in $\sH$. A representation $\pi$ with a cyclic vector is called a cyclic one.} vector $\left|\Phi\right\rangle$. Hence, the set $\Omega_p(\fA)$ of all pure states of the algebra $\fA$ coincides with the set of all vector states associated with all irreducible representations of the algebra $\fA$.

\section{Classification of States}
Separable and entangled (non-separable) states on the $C^\ast$-algebra $\fA$ are the main objects of study in this section. The starting point for constructing such states is the notion of an algebraic bipartition of the operator algebra $\fA$ (see \cite{BF16}).
\begin{defn}
An algebraic bipartition of the $C^\ast$-algebra $\fA$ is any pair $(\fA_1,\fA_2)$ of subalgebras $\fA_1,\fA_2\subset\fA$ such that $\fA_1\cap\fA_2=1_\fA$.
\end{defn}
This directly implies the concept of operator locality.
\begin{defn}
An element of an algebra $\fA$ is called local with respect to a given bipartition $(\fA_1,\fA_2)$ or $(\fA_1,\fA_2)$-local if this element is the product $\fa_1\fa_2$ of the element $\fa_1\in\fA_1$ and the element $\fa_2\in\fA_2$.
\end{defn}
The following definition establishes the most important concepts of separability and entanglement of states on algebra $\fA$.
\begin{defn}
A state $\omega$ on the algebra $\fA$ is called separable with respect to a bipartition $(\fA_1,\fA_2)$ if the expectation $\omega(\fa_1\fa_2)$ of any local operator $\fa_1\fa_2$ can be decomposed into a linear convex combination of products of expectations
\begin{equation}\label{Sep}
\omega(\fa_1\fa_2)=\sum_k\lambda_k\omega^{(1)}_k(\fa_1)\omega^{(2)}_k(\fa_2),\quad\lambda_k\geq 0,\;\;\sum_k\lambda_k=1,
\end{equation}
where $\omega^{(1)}_k$ and $\omega^{(2)}_k$ are states on the algebra $\fA$. Otherwise, the state $\omega$ is called entangled with respect to the bipartition $(\fA_1,\fA_2)$.
\end{defn}
This definition of separability can be easily extended to the case of more than two partitions. For example, for the case of $n$-partition we have
\[
\omega(\fa_1\fa_2\cdots\fa_n)=\sum_k\lambda_k\omega^{(1)}_k(\fa_1)\omega^{(2)}_k(\fa_2)\cdots\omega^{(n)}_k(\fa_n),
\quad\lambda_k\geq 0,\;\;\sum_k\lambda_k=1.
\]
When the state $\omega$ is pure, the separability condition (\ref{Sep}) is simplified.
\begin{defn}
Pure states $\omega$ on the operator algebra $\fA$ are separable with respect to a given bipartition $(\fA_1,\fA_2)$ only and if only
\[
\omega(\fa_1\fa_2)=\omega(\fa_1)\omega(\fa_2)
\]
for all local operators $\fa_1\fa_2$.
\end{defn}
Hence, pure separable states are product states. Taking into account the GNS construction (theorem 1), the general form of any pure separable state  is given by the following theorem.
\begin{thm}{\rm (\textrm{\cite{BF16}})}
Let the state $\omega$ on the algebra $\fA$ be separable with respect to a given bipartition $(\fA_1,\fA_2)$. Then the normed pure state $\left|\psi\right\rangle$ in a GNS-Hilbert space $\sH_\omega$ is $(\fA_1,\fA_2)$-separable if and only if
\[
\left|\psi\right\rangle=\pi_\omega(\fb^{(1)})\pi_\omega(\fb^{(2)})\left|\omega\right\rangle,
\]
where $\fb^{(i)}\in\fA$, $i=1,2$, $\pi_\omega(\fb^{(i)})$ is a cyclic representation of the algebra $\fA$ in the Hilbert space $\sH_\omega$.
\end{thm}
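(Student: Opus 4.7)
The natural reading of the statement is $\fb^{(i)}\in\fA_i$, so that $\pi_\omega(\fb^{(1)})\pi_\omega(\fb^{(2)})$ is an $(\fA_1,\fA_2)$-local operator in the sense of Definition~2. The plan is, in each direction, to compare the vector state $\omega_\psi$ defined through (\ref{VectState}) with the pure-state factorization criterion of Definition~5, using the assumed separability of $\omega$ and the standard commutativity $[\fA_1,\fA_2]=0$ attached to an algebraic bipartition.

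For the ``if'' direction, I would compute
\[
\langle\psi|\pi_\omega(\fa_1\fa_2)|\psi\rangle=\omega\bigl(\fb^{(2)*}\fb^{(1)*}\fa_1\fa_2\fb^{(1)}\fb^{(2)}\bigr),
\]
and use $[\fA_1,\fA_2]=0$ to regroup the argument as $(\fb^{(1)*}\fa_1\fb^{(1)})(\fb^{(2)*}\fa_2\fb^{(2)})$, a product of an $\fA_1$-element with an $\fA_2$-element. Separability of $\omega$ then factors this into $\omega(\fb^{(1)*}\fa_1\fb^{(1)})\,\omega(\fb^{(2)*}\fa_2\fb^{(2)})$. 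Substituting $\fa_2=1_\fA$ and $\fa_1=1_\fA$ in turn yields $\omega_\psi(\fa_1)$ and $\omega_\psi(\fa_2)$, while taking $\fa_1=\fa_2=1_\fA$ gives $\|\psi\|^2=\omega(\fb^{(1)*}\fb^{(1)})\omega(\fb^{(2)*}\fb^{(2)})$; dividing through produces $\omega_\psi(\fa_1\fa_2)=\omega_\psi(\fa_1)\omega_\psi(\fa_2)$, which is the pure-state separability of $\omega_\psi$.

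For the converse the central device is a tensor-product realization of the GNS data of $\omega$. Introduce the marginals $\omega_i:=\omega|_{\fA_i}$ with GNS triples $(\pi_{\omega_i},\sH_{\omega_i},|\omega_i\rangle)$, and define
\[
U:\pi_\omega(\fa_1\fa_2)|\omega\rangle\;\longmapsto\;\pi_{\omega_1}(\fa_1)|\omega_1\rangle\otimes\pi_{\omega_2}(\fa_2)|\omega_2\rangle.
\]
Separability of $\omega$ together with $[\fA_1,\fA_2]=0$ yields the identity $\omega(\fa_1\fa_2\fa'_1\fa'_2)=\omega_1(\fa_1\fa'_1)\,\omega_2(\fa_2\fa'_2)$, which makes $U$ well-defined and isometric on a dense subspace; by the uniqueness clause of Theorem~1 it extends to a unitary $\sH_\omega\cong\sH_{\omega_1}\otimes\sH_{\omega_2}$ intertwining $\pi_\omega(\fa_i)$ with the natural one-sided action on the $i$-th tensor factor. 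Under this identification the hypothesis that $|\psi\rangle$ is $(\fA_1,\fA_2)$-separable says that both reduced density operators of $|\psi\rangle$ are pure, and a standard Schmidt-decomposition argument forces $|\psi\rangle=|\psi_1\rangle\otimes|\psi_2\rangle$. Cyclicity of $|\omega_i\rangle$ in $\sH_{\omega_i}$ then supplies $\fb^{(i)}\in\fA_i$ with $\pi_{\omega_i}(\fb^{(i)})|\omega_i\rangle=|\psi_i\rangle$ (in the topological closure), and pulling back through $U$ gives $|\psi\rangle=\pi_\omega(\fb^{(1)})\pi_\omega(\fb^{(2)})|\omega\rangle$.

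The main obstacle is the construction of the unitary $U$. Both its well-definedness on $\pi_\omega(\fA_1)\pi_\omega(\fA_2)|\omega\rangle$ and its isometry reduce to the single identity $\omega(\fa_1\fa_2\fa'_1\fa'_2)=\omega_1(\fa_1\fa'_1)\omega_2(\fa_2\fa'_2)$, which in turn rests on \emph{both} the bipartition commutativity $[\fA_1,\fA_2]=0$ and the separability hypothesis on $\omega$. Once $U$ is in place, the pure-implies-product statement for bipartite vectors is classical, and the remainder of the converse is a routine appeal to the cyclicity of the marginal vectors $|\omega_i\rangle$.
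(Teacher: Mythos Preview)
The paper does not give its own proof of this statement: Theorem~2 is quoted from \cite{BF16} and stated without argument, so there is no in-paper proof against which to compare your proposal.

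Regarding the proposal itself, two points are worth flagging. First, you rely throughout on $[\fA_1,\fA_2]=0$, but Definition~1 as written in the paper only requires $\fA_1\cap\fA_2=1_\fA$; commutativity is an extra hypothesis you are importing (it is indeed standard in \cite{BF16}, but is not part of the paper's stated definition). Second, in the ``if'' direction you pass from the separability of $\omega$ in the sense of Definition~3 (a convex decomposition $\sum_k\lambda_k\omega_k^{(1)}\omega_k^{(2)}$) directly to the product identity $\omega(c_1c_2)=\omega(c_1)\omega(c_2)$, which is the \emph{pure} separability criterion of Definition~4; a merely separable $\omega$ need not factor in this way, so either $\omega$ must be taken pure or this step needs further justification. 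The converse direction---building a tensor-product GNS realization from the marginals, invoking Schmidt decomposition to force a product vector, and then using cyclicity of $|\omega_i\rangle$---is the standard route and matches what one finds in \cite{BF16}.
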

It is obvious that for the case of algebraic $n$-partition, the pure separable state has the form
\[
\left|\psi\right\rangle=\pi_\omega(\fb^{(1)})\pi_\omega(\fb^{(2)})\cdots\pi_\omega(\fb^{(n)})\left|\omega\right\rangle.
\]

\section{Implementation of Operator Algebra}
In this section, we consider a specific implementation of the operator algebra $\fA$. The transition
\[
\fA\Rightarrow\pi(\fA)
\]
from $\fA$ to a specific algebra $\pi(\fA)$, where $\pi$ is the chosen physical representation of the algebra of observables, sometimes called ``dressing'' the operator algebra. Following Heisenberg \cite{Heisen51}, we assume that  at the fundamental level, the main observable is the \textit{\textbf{energy}} that corresponds to the Hermitian operator $H$. As a \textit{fundamental symmetry} that allows us to structure the energy levels of the state spectrum, we choose the group $\SU(2,2)$ (the twofold covering of the conformal group $\SO_0(2,4)$).
\begin{thm}
Let the $C^\ast$-algebra $\fA$ consists of the energy operator $H$ and the generators of the group $\SU(2,2)$ attached to $H$, forming a general system of eigenfunctions with $H$. Then the set $\Omega$ of pure states $\omega$ on the algebra $\fA$ corresponds to a system of cyclic vectors $\left|\psi\right\rangle$ in the GNS-Hilbert space $\sH_\omega$:
\begin{equation}\label{Cycle}
\begin{array}{lcl}
&&\vdots\\
\left|\psi_n\right\rangle&=&\pi_\omega(\fh^{(1)})\pi_\omega(\fh^{(2)})\cdots\pi_\omega(\fh^{(n)}
\left|\omega\right\rangle,\\
\left|\psi_{n-1}\right\rangle&=&\pi_\omega(\fh^{(1)})\pi_\omega(\fh^{(2)})\cdots\pi_\omega(\fh^{(n-1)}
\left|\omega\right\rangle,\\
&&\vdots\\
\left|\psi_2\right\rangle&=&\pi_\omega(\fh^{(1)})\pi_\omega(\fh^{(2)})\left|\omega\right\rangle,\\
\left|\psi_1\right\rangle&=&\pi_\omega(\fh^{(1)})\left|\omega\right\rangle,\\
\left|\psi_0\right\rangle&=&\left|\omega\right\rangle,\\
\left|\psi^\ast_1\right\rangle&=&\pi^\ast_\omega(\fh^{(1)})\left|\omega\right\rangle,\\
\left|\psi^\ast_2\right\rangle&=&\pi^\ast_\omega(\fh^{(1)})\pi^\ast_\omega(\fh^{(2)})\left|\omega\right\rangle,\\
&&\vdots\\
\left|\psi^\ast_{n-1}\right\rangle&=&\pi^\ast_\omega(\fh^{(1)})\pi^\ast_\omega(\fh^{(2)})\cdots\pi^\ast_\omega(\fh^{(n-1)}
\left|\omega\right\rangle,\\
\left|\psi^\ast_n\right\rangle&=&\pi^\ast_\omega(\fh^{(1)})\pi^\ast_\omega(\fh^{(2)})\cdots\pi^\ast_\omega(\fh^{(n)}
\left|\omega\right\rangle,\\
&&\vdots
\end{array}
\end{equation}
where $\fh^{(i)}\in\fA$, $i=1,2,\ldots,n$; $\pi_\omega(\fh^{(i)})$ is a fundamental representation of the spinor group $\spin_+(1,3)$.
\end{thm}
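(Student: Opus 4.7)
The plan is to assemble the tower of cyclic vectors iteratively by combining the GNS construction (Theorem 1) with the $n$-partition extension of Theorem 2 spelled out immediately after Definition 4. First I would fix a pure state $\omega \in \Omega_p(\fA)$ and invoke Theorem 1 to produce the cyclic representation $\pi_\omega$ of $\fA$ on the GNS-Hilbert space $\sH_\omega$, together with its cyclic vector $\left|\omega\right\rangle$ satisfying $\omega(\fa) = \langle\omega \mid \pi_\omega(\fa) \mid \omega\rangle$. This provides the base level $\left|\psi_0\right\rangle = \left|\omega\right\rangle$ of the ladder.

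Next I would exploit the hypothesis that the generators of $\SU(2,2)$ are \emph{attached to} $H$, i.e.\ they form a common eigensystem with $H$: this is exactly what is needed to split $\fA$ into a compatible family of subalgebras $(\fA_1,\ldots,\fA_n)$ in the sense of Definition 1 extended to $n$-partitions, with $\fA_i \cap \fA_j = 1_\fA$ for $i\neq j$. Because every $\omega \in \Omega_p(\fA)$ is pure, the separability condition collapses via Definition 4, and the $n$-partition form of Theorem 2 quoted at the end of Section 3 then gives the product representation $\left|\psi_k\right\rangle = \pi_\omega(\fh^{(1)})\cdots\pi_\omega(\fh^{(k)})\left|\omega\right\rangle$ for each $k$; truncating at successive values of $k$ yields the ascending part of the tower (\ref{Cycle}).

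The remaining step is to identify the elements $\fh^{(i)}$ with generators whose representations act through the fundamental representation of $\spin_+(1,3)$. Here I would use the chain of embeddings $\spin_+(1,3)\simeq\SL(2,\C) \hookrightarrow \SU(2,2)$ realising the Lorentz spin group as the natural spinor subgroup of the conformal covering, and recall from Section~1 that its two--dimensional fundamental representation is precisely the action on the minimal left ideal of the biquaternion algebra $\C\otimes\BH$. Each $\pi_\omega(\fh^{(i)})$ can therefore be chosen as this two--component spinor action, and the iterated action $\pi_\omega(\fh^{(1)})\cdots\pi_\omega(\fh^{(n)})\left|\omega\right\rangle$ realises the $n$-fold tensor product of fundamental spinspaces, matching the tensor-product decomposition of higher--spin Lorentz representations mentioned in Section~1. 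The starred ladder $\left|\psi_k^\ast\right\rangle$ is obtained by the same argument applied to the conjugate representation $\pi^\ast_\omega$, supplying the dotted (opposite chirality) spinor tower.

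The main obstacle is the third step: \emph{rigorously} identifying the algebraic factors $\fh^{(i)}$ as elements whose GNS images act as \emph{fundamental} $\spin_+(1,3)$ representations, and not as some coarser reducible restriction of the $\SU(2,2)$-generators. This amounts to verifying that on the common $H$-eigenbasis the restriction to the Lorentz spin subgroup decomposes into copies of the two--dimensional irreducible representation, so that the tensor--product pattern in (\ref{Cycle}) is fixed uniquely up to the unitary equivalence of Theorem 1. The remainder of the argument is a direct bookkeeping exercise combining the GNS cyclicity property, the product structure of Theorem 2, and the factorisation of higher spinor representations into fundamental ones.
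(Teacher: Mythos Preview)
Your first two steps (GNS cyclicity plus the $n$-partition version of Theorem~2) are sound and do recover the product form of the $\left|\psi_k\right\rangle$. However, the route you take diverges substantially from the paper's, and the gap you yourself flag in step three is precisely the point where the paper does the real work by a completely different mechanism.

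The paper does not appeal to Theorem~2 or to the separability formalism of Section~3 at all. Instead it proceeds Clifford-algebraically: it realizes $\SU(2,2)\simeq\spin_+(2,4)$ explicitly inside $\cl^+_{2,4}\simeq\cl_{4,1}\simeq\C_4$ (via the twistor reduction $I_{2,4}\to I_{4,1}$), exhibits $\spin_+(1,3)\subset\spin_+(2,4)$ through the biquaternion algebra $\C_2$, and then invokes a $\spin_+(2,4)/\spin_+(1,3)$ reduction based on the Cartan decomposition to split any representation $\fB$ of the conformal spin group into an orthogonal sum $\bigoplus_i\pi_i$ of irreducibles of the Lorentz subgroup. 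The commutation of $H$ with the $\mathfrak{sl}(2,\C)$ generators then lets the paper identify each eigenspace $E_\lambda$ of $H$ with a symmetric space $\Sym_{(k,r)}$ carrying the interlocking representation $\boldsymbol{\tau}_{k/2,r/2}$, and the explicit tensor factorisation
\[
\boldsymbol{\tau}_{k/2,r/2}\simeq\boldsymbol{\tau}_{1/2,0}^{\otimes k}\otimes\boldsymbol{\tau}_{0,1/2}^{\otimes r}
\]
is what forces every cyclic factor to be the \emph{fundamental} two-dimensional representation rather than some reducible restriction.

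That last factorisation is exactly the ingredient your proposal is missing. Your scheme produces operators $\pi_\omega(\fh^{(i)})$ acting on $\sH_\omega$, but nothing in the $n$-partition argument alone tells you that each factor is two-dimensional and irreducible; you would still need the representation-theoretic input (either the tensor decomposition above or an equivalent branching statement for $\SU(2,2)\downarrow\SL(2,\C)$) to pin this down. In short: your outline is not wrong, but it reduces the theorem to the very obstacle you isolate, whereas the paper dissolves that obstacle by the Clifford/Cartan machinery and the explicit tensor structure of $\boldsymbol{\tau}_{k/2,r/2}$.
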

\begin{proof}
Consider the closure $\overline{\fA}$ (the \textit{von Neumann observable algebra}) of an algebra $\fA$ in an $\sigma$-weak operator topology\footnote{According to the von Neumann bicommutant theorem, $\overline{\fA}$ cincides with the repeated commutant $\fA^{cc}$ of the algebra $\fA$ \cite{BLOT}.}. And let the generators of the complex shell of the group algebra $\mathfrak{su}(2,2)$ be self-adjoint operators in $\sH_\infty$ attached to the von Neumann observable algebra $\overline{\fA}$, i.e. such operators that all spectral projectors $E_\lambda$, and hence all bounded functions of the operators belong to $\overline{\fA}$\footnote{It should be noted that already at an early stage of the development of quantum mechanics, in the fundamental work \cite{BHJ26} it was shown that the energy operator $H$ is permuted with all operators in $\sH_\infty$, representing the Lie algebra $\mathfrak{su}(2)$ of the group $\SU(2)$ (see also \cite[p.~138]{RF70}). By virtue of the isomorphism $\mathfrak{sl}(2,\C)\simeq\mathfrak{su}(2)\oplus i\mathfrak{su}(2)$ \cite[p.~28]{Knapp} (the so-called ``unitary trick'' of Weyl), this result can be extended to the group $\SL(2,\C)\simeq\spin_+(1,3)$. Since the operators of group algebra $\mathfrak{sl}(2,\C)$ and the energy operator $H$ commute, then, as a consequence, a general system of eigenfunctions can be constructed for these operators.}.

The twofold covering $\SU(2,2)$ of the conformal group $\SO_0(2,4)$ is isomorphic to the spinor group
\begin{equation}\label{Spin24}
\spin_+(2,4)=\left\{s\in\C_4\;|\;N(s)=1\right\},
\end{equation}
where $\C_4$ is a \textit{Dirac algebra}. On the other hand, twistors can be defined as ``reduced spinors'' of the conformal group $\SO_0(2,4)$. General spinors of the group $\SO_0(2,4)$ are elements of the minimal left ideal of the \textit{conformal algebra} $\cl_{2,4}$:
\[
I_{2,4}=\cl_{2,4}f_{2,4}=\cl_{2,4}\frac{1}{2}(1+\e_{15})\frac{1}{2}(1+\e_{26}),
\]
where $f_{2,4}$ is a primitive idempotent of the algebra $\cl_{2,4}$. Reduced spinors (twistors) are formulated in the framework of an even subalgebra $\cl^+_{2,4}\simeq\cl_{4,1}$ (\textit{de Sitter algebra}). The minimal left ideal of the algebra $\cl_{4,1}\simeq\C_4$ is defined by the following expression:
\[
I_{4,1}=\cl_{4,1}f_{4,1}=\cl_{4,1}\frac{1}{2}(1+\e_0)\frac{1}{2}
(1+i\e_{12}).
\]
Hence, after the reduction $I_{2,4}\rightarrow I_{4,1}$ generated by the isomorphism $\cl^+_{2,4}\simeq\cl_{4,1}$\footnote{The algebra $\cl_{2,4}$ is of $p-q\equiv 6\pmod{8}$, hence, by virtue of the general isomorphism $\cl^+_{p,q}\simeq\cl_{q,p-1}$ (see \cite{Var01}), we have $\cl^+_{2,4}\simeq\cl_{4,1}$, where $\cl_{4,1}$ is the de Sitter algebra associated with the space $\R^{4,1}$. In turn, the algebra $\cl_{4,1}$ is of type $p-q\equiv 3\pmod{8}$, i.e. it has a complex division ring $\K\simeq\C$ and, therefore, has an isomorphism $\cl_{4,1}\simeq\C_4$, where $\C_4$ is a Dirac algebra.}, we see that the twistors $\bsZ^\alpha$ are elements of the ideal $I_{4,1}$, which leads to the group $\SU(2,2)\simeq\spin_+(2,4)\in\cl^+_{2,4}$. Indeed, let us consider the algebra $\cl_{2,4}$ associated with a six-dimensional pseudoeuclidean space $\R^{2,4}$. A twofold covering $\spin_+(2,4)$ of the rotation group $\SO_0(2,4)$ of the space $\R^{2,4}$ is described within the even subalgebra $\cl^+_{2,4}$. The algebra $\cl_{2,4}$ has the type $p-q\equiv 6\pmod{8}$, therefore, according to $\cl^+_{p,q}\simeq\cl_{q,p-1}$ we have $\cl^+_{2,4}\simeq\cl_{4,1}$, where $\cl_{4,1}$ is the de Sitter algebra associated with the space $\R^{4,1}$. In its turn, the algebra $\cl_{4,1}$ has the type $p-q\equiv 3\pmod{8}$ and, therefore, there is an isomorphism $\cl_{4,1}\simeq\C_4$, where $\C_4$ is a Dirac algebra. The algebra $\C_4$ is a comlexification of the space-time algebra: $\C_4\simeq\C\otimes\cl_{1,3}$. Further, $\cl_{1,3}$ admits the following factorization: $\cl_{1,3}\simeq\cl_{1,1}\otimes\cl_{0,2}$. Hence it immediately follows that $\C_4\simeq\C\otimes\cl_{1,1}\otimes\cl_{0,2}$. Thus,
\begin{equation}\label{ConfGroup0}
\spin_+(2,4)=\left\{s\in\C\otimes\cl_{1,1}\otimes\cl_{0,2}\;|\;N(s)=1\right\}.
\end{equation}
On the other hand, in virtue of $\cl_{1,3}\simeq\cl_{1,1}\otimes\cl_{0,2}$ a general element of the algebra $\cl_{1,3}$ can be written in the form
\[
\cA_{\cl_{1,3}}=\cl^0_{1,1}\e_0+\cl^1_{1,1}\phi+\cl^2_{1,1}\psi+\cl^3_{1,1}\phi\psi,
\]
where $\phi=\e_{123}$, $\psi=\e_{124}$ are quaternion units. Therefore,
\begin{equation}\label{ConfGroup}
\spin_+(2,4)=
{\renewcommand{\arraystretch}{1.2}
\left\{s\in\left.\begin{bmatrix} \C\otimes\cl^0_{1,1}-i\C\otimes\cl^3_{1,1} &
-\C\otimes\cl^1_{1,1}+i\C\otimes\cl^2_{1,1}\\
\C\otimes\cl^1_{1,1}+i\C\otimes\cl^2_{1,1} & \C\otimes\cl^0_{1,1}+i\C\otimes\cl^3_{1,1}\end{bmatrix}\right|\;N(s)=1
\right\}.}
\end{equation}
Mappings of the space $\R^{1,3}$, generated by the group $\SO_0(2,4)$, induce linear transformations of the twistor space $\C^4$ with preservation of the form $\bsZ^\alpha\overline{\bsZ}_\alpha$ of the signature $(+,+,-,-)$. Hence it follows that a corresponding group in the twistor space is $\SU(2,2)$ (the group of pseudo-unitary unimodular $4\times 4$ matrices, see (\ref{ConfGroup})):
\begin{equation}\label{ConfGroup2}
\SU(2,2)=\left\{\ar\begin{bmatrix} A & B\\ C & D\end{bmatrix}
\in\C_4:\;\det\begin{bmatrix} A & B \\ C & D\end{bmatrix}=1
\right\}\simeq\spin_+(2,4).
\end{equation}

In turn, the twofold covering $\SL(2,\C)$ of the proper Lorentz group $\SO_0(1,3)$ is isomorphic to the spinor group
\begin{equation}\label{Spin13}
\spin_+(1,3)=\left\{s\in\C_2\;|\;N(s)=1\right\},
\end{equation}
where $\C_2$ is a \textit{biquaternion algebra}. Group (\ref{Spin13}) is a subgroup of group (\ref{Spin24}), $\spin_+(1,3)\subset\spin_+(2,4)$. Hence $\spin_+(2,4)/\spin_+(1,3)$-reduction\footnote{In more detail, the $\spin_+(2,4)/\spin_+(1,3)$-reduction (a reduction based on the Cartan decomposition, see \cite{Var04e,Var06,Var07}) is divided into two consecutive reductions $\SU(2,2)/\Sp(1,1)$ and $\Sp(1,1)/\SL(2,\C)$, where $\Sp(1,1)$ is a twofold covering of the de Sitter group. Here we restrict ourselves to finite-dimensional representations. A more general case of decomposition of infinite-dimensional representations of locally compact groups is considered by Naimark \cite{Nai64}.} of the representation $\fB$ of the group $\spin_+(2,4)$ over the subgroup $\spin_+(1,3)$ leads to the decomposition of $\fB$ into the orthogonal sum of irreducible representations $\pi_i$ of the subgroup $\spin_+(1,3)$:
\[
\fB=\pi_1\oplus\pi_2\oplus\ldots\oplus\pi_i\oplus\ldots
\]
The system of irreducible representations of the group $\spin_+(1,3)\simeq\SL(2,\C)$ is shown in Figure 1.
\begin{figure}[ht]
\unitlength=1.5mm
\begin{center}
\begin{picture}(100,85)(0,-40)
\put(50,0){$\overset{(0,0)}{\bullet}$}\put(47,5.5){\line(1,0){10}}\put(52.25,2.75){\line(0,1){7.25}}
\put(47,-5.5){\line(1,0){10}}\put(52.25,-7.15){\line(0,1){7.25}}
\put(55,5){$\overset{(\frac{1}{2},0)}{\bullet}$}
\put(55,-6){$\overset{(\frac{1}{2},0)}{\bullet}$}
\put(45,5){$\overset{(0,\frac{1}{2})}{\bullet}$}
\put(45,-6){$\overset{(0,\frac{1}{2})}{\bullet}$}
\put(40,10){$\overset{(0,1)}{\bullet}$}\put(42,10.5){\line(1,0){10}}\put(47.25,7.75){\line(0,1){7.25}}
\put(40,-11){$\overset{(0,1)}{\bullet}$}\put(42,-10.5){\line(1,0){10}}\put(47.25,-12.75){\line(0,1){7.25}}
\put(50,10){$\overset{(\frac{1}{2},\frac{1}{2})}{\bullet}$}
\put(50,-11){$\overset{(\frac{1}{2},\frac{1}{2})}{\bullet}$}
\put(52,10.5){\line(1,0){10}}\put(57.25,7.75){\line(0,1){7.25}}
\put(52,-10.5){\line(1,0){10}}\put(57.25,-12.75){\line(0,1){7.25}}
\put(60,10){$\overset{(1,0)}{\bullet}$}
\put(60,-11){$\overset{(1,0)}{\bullet}$}
\put(35,15){$\overset{(0,\frac{3}{2})}{\bullet}$}\put(37,15.5){\line(1,0){10}}\put(42.25,12.75){\line(0,1){7.25}}
\put(35,-16){$\overset{(0,\frac{3}{2})}{\bullet}$}\put(37,-15.5){\line(1,0){10}}\put(42.25,-17.75){\line(0,1){7.25}}
\put(45,15){$\overset{(\frac{1}{2},1)}{\bullet}$}\put(47,15.5){\line(1,0){10}}\put(52.25,12.75){\line(0,1){7.25}}
\put(45,-16){$\overset{(\frac{1}{2},1)}{\bullet}$}\put(47,-15.5){\line(1,0){10}}\put(52.25,-17.75){\line(0,1){7.25}}
\put(55,15){$\overset{(1,\frac{1}{2})}{\bullet}$}\put(57,15.5){\line(1,0){10}}\put(62.25,12.75){\line(0,1){7.25}}
\put(55,-16){$\overset{(1,\frac{1}{2})}{\bullet}$}\put(57,-15.5){\line(1,0){10}}\put(62.25,-17.75){\line(0,1){7.25}}
\put(65,15){$\overset{(\frac{3}{2},0)}{\bullet}$}
\put(65,-16){$\overset{(\frac{3}{2},0)}{\bullet}$}
\put(30,20){$\overset{(0,2)}{\bullet}$}\put(32,20.5){\line(1,0){10}}\put(37.25,17.75){\line(0,1){7.25}}
\put(30,-21){$\overset{(0,2)}{\bullet}$}\put(32,-20.5){\line(1,0){10}}\put(37.25,-22.75){\line(0,1){7.25}}
\put(40,20){$\overset{(\frac{1}{2},\frac{3}{2})}{\bullet}$}
\put(40,-21){$\overset{(\frac{1}{2},\frac{3}{2})}{\bullet}$}
\put(42,20.5){\line(1,0){10}}\put(47.25,17.75){\line(0,1){7.25}}
\put(42,-20.5){\line(1,0){10}}\put(47.25,-22.75){\line(0,1){7.25}}
\put(50,20){$\overset{(1,1)}{\bullet}$}\put(52,20.5){\line(1,0){10}}\put(57.25,17.75){\line(0,1){7.25}}
\put(50,-21){$\overset{(1,1)}{\bullet}$}\put(52,-20.5){\line(1,0){10}}\put(57.25,-22.75){\line(0,1){7.25}}
\put(60,20){$\overset{(\frac{3}{2},\frac{1}{2})}{\bullet}$}
\put(60,-21){$\overset{(\frac{3}{2},\frac{1}{2})}{\bullet}$}
\put(62,20.5){\line(1,0){10}}\put(67.25,17.75){\line(0,1){7.25}}
\put(62,-20.5){\line(1,0){10}}\put(67.25,-22.75){\line(0,1){7.25}}
\put(70,20){$\overset{(2,0)}{\bullet}$}
\put(70,-21){$\overset{(2,0)}{\bullet}$}
\put(25,25){$\overset{(0,\frac{5}{2})}{\bullet}$}\put(27,25.5){\line(1,0){10}}\put(32.25,22.75){\line(0,1){7.25}}
\put(25,-26){$\overset{(0,\frac{5}{2})}{\bullet}$}\put(27,-25.5){\line(1,0){10}}\put(32.25,-27.75){\line(0,1){7.25}}
\put(35,25){$\overset{(\frac{1}{2},2)}{\bullet}$}\put(37,25.5){\line(1,0){10}}\put(42.25,22.75){\line(0,1){7.25}}
\put(35,-26){$\overset{(\frac{1}{2},2)}{\bullet}$}\put(37,-25.5){\line(1,0){10}}\put(42.25,-27.75){\line(0,1){7.25}}
\put(45,25){$\overset{(1,\frac{3}{2})}{\bullet}$}\put(47,25.5){\line(1,0){10}}\put(52.25,22.75){\line(0,1){7.25}}
\put(45,-26){$\overset{(1,\frac{3}{2})}{\bullet}$}\put(47,-25.5){\line(1,0){10}}\put(52.25,-27.75){\line(0,1){7.25}}
\put(55,25){$\overset{(\frac{3}{2},1)}{\bullet}$}\put(57,25.5){\line(1,0){10}}\put(62.25,22.75){\line(0,1){7.25}}
\put(55,-26){$\overset{(\frac{3}{2},1)}{\bullet}$}\put(57,-25.5){\line(1,0){10}}\put(62.25,-27.75){\line(0,1){7.25}}
\put(65,25){$\overset{(2,\frac{1}{2})}{\bullet}$}\put(67,25.5){\line(1,0){10}}\put(72.25,22.75){\line(0,1){7.25}}
\put(65,-26){$\overset{(2,\frac{1}{2})}{\bullet}$}\put(67,-25.5){\line(1,0){10}}\put(72.25,-27.75){\line(0,1){7.25}}
\put(75,25){$\overset{(\frac{5}{2},0)}{\bullet}$}
\put(75,-26){$\overset{(\frac{5}{2},0)}{\bullet}$}
\put(20,30){$\overset{(0,3)}{\bullet}$}\put(22,30.5){\line(1,0){10}}\put(27.25,27.75){\line(0,1){7.25}}
\put(20,-31){$\overset{(0,3)}{\bullet}$}\put(22,-30.5){\line(1,0){10}}\put(27.25,-32.75){\line(0,1){7.25}}
\put(30,30){$\overset{(\frac{1}{2},\frac{5}{2})}{\bullet}$}
\put(30,-31){$\overset{(\frac{1}{2},\frac{5}{2})}{\bullet}$}
\put(32,30.5){\line(1,0){10}}\put(37.25,27.75){\line(0,1){7.25}}
\put(32,-30.5){\line(1,0){10}}\put(37.25,-32.75){\line(0,1){7.25}}
\put(40,30){$\overset{(1,2)}{\bullet}$}\put(42,30.5){\line(1,0){10}}\put(47.25,27.75){\line(0,1){7.25}}
\put(40,-31){$\overset{(1,2)}{\bullet}$}\put(42,-30.5){\line(1,0){10}}\put(47.25,-32.75){\line(0,1){7.25}}
\put(50,30){$\overset{(\frac{3}{2},\frac{3}{2})}{\bullet}$}
\put(50,-31){$\overset{(\frac{3}{2},\frac{3}{2})}{\bullet}$}
\put(52,30.5){\line(1,0){10}}\put(57.25,27.75){\line(0,1){7.25}}
\put(52,-30.5){\line(1,0){10}}\put(57.25,-32.75){\line(0,1){7.25}}
\put(60,30){$\overset{(2,1)}{\bullet}$}\put(62,30.5){\line(1,0){10}}\put(67.25,27.75){\line(0,1){7.25}}
\put(60,-31){$\overset{(2,1)}{\bullet}$}\put(62,-30.5){\line(1,0){10}}\put(67.25,-32.75){\line(0,1){7.25}}
\put(70,30){$\overset{(\frac{5}{2},\frac{5}{2})}{\bullet}$}
\put(70,-31){$\overset{(\frac{5}{2},\frac{5}{2})}{\bullet}$}
\put(72,30.5){\line(1,0){10}}\put(77.25,27.75){\line(0,1){7.25}}
\put(72,-30.5){\line(1,0){10}}\put(77.25,-32.75){\line(0,1){7.25}}
\put(80,30){$\overset{(3,0)}{\bullet}$}
\put(80,-31){$\overset{(3,0)}{\bullet}$}
\put(15,35){$\overset{(0,\frac{7}{2})}{\bullet}$}\put(17,35.5){\line(1,0){10}}\put(22.25,32.75){\line(0,1){7.25}}
\put(15,-36){$\overset{(0,\frac{7}{2})}{\bullet}$}\put(17,-35.5){\line(1,0){10}}\put(22.25,-37.75){\line(0,1){7.25}}
\put(25,35){$\overset{(\frac{1}{2},3)}{\bullet}$}\put(27,35.5){\line(1,0){10}}\put(32.25,32.75){\line(0,1){7.25}}
\put(25,-36){$\overset{(\frac{1}{2},3)}{\bullet}$}\put(27,-35.5){\line(1,0){10}}\put(32.25,-37.75){\line(0,1){7.25}}
\put(35,35){$\overset{(1,\frac{5}{2})}{\bullet}$}\put(37,35.5){\line(1,0){10}}\put(42.25,32.75){\line(0,1){7.25}}
\put(35,-36){$\overset{(1,\frac{5}{2})}{\bullet}$}\put(37,-35.5){\line(1,0){10}}\put(42.25,-37.75){\line(0,1){7.25}}
\put(45,35){$\overset{(\frac{3}{2},2)}{\bullet}$}\put(47,35.5){\line(1,0){10}}\put(52.25,32.75){\line(0,1){7.25}}
\put(45,-36){$\overset{(\frac{3}{2},2)}{\bullet}$}\put(47,-35.5){\line(1,0){10}}\put(52.25,-37.75){\line(0,1){7.25}}
\put(55,35){$\overset{(2,\frac{3}{2})}{\bullet}$}\put(57,35.5){\line(1,0){10}}\put(62.25,32.75){\line(0,1){7.25}}
\put(55,-36){$\overset{(2,\frac{3}{2})}{\bullet}$}\put(57,-35.5){\line(1,0){10}}\put(62.25,-37.75){\line(0,1){7.25}}
\put(65,35){$\overset{(\frac{5}{2},1)}{\bullet}$}\put(67,35.5){\line(1,0){10}}\put(72.25,32.75){\line(0,1){7.25}}
\put(65,-36){$\overset{(\frac{5}{2},1)}{\bullet}$}\put(67,-35.5){\line(1,0){10}}\put(72.25,-37.75){\line(0,1){7.25}}
\put(75,35){$\overset{(3,\frac{1}{2})}{\bullet}$}\put(77,35.5){\line(1,0){10}}\put(82.25,32.75){\line(0,1){7.25}}
\put(75,-36){$\overset{(3,\frac{1}{2})}{\bullet}$}\put(77,-35.5){\line(1,0){10}}\put(82.25,-37.75){\line(0,1){7.25}}
\put(85,35){$\overset{(\frac{7}{2},0)}{\bullet}$}
\put(85,-36){$\overset{(\frac{7}{2},0)}{\bullet}$}
\put(10,40){$\overset{(0,4)}{\bullet}$}\put(12,40.5){\line(1,0){10}}
\put(10,-41){$\overset{(0,4)}{\bullet}$}\put(12,-40.5){\line(1,0){10}}
\put(20,40){$\overset{(\frac{1}{2},\frac{7}{2})}{\bullet}$}\put(22,40.5){\line(1,0){10}}
\put(20,-41){$\overset{(\frac{1}{2},\frac{7}{2})}{\bullet}$}\put(22,-40.5){\line(1,0){10}}
\put(30,40){$\overset{(1,3)}{\bullet}$}\put(32,40.5){\line(1,0){10}}
\put(30,-41){$\overset{(1,3)}{\bullet}$}\put(32,-40.5){\line(1,0){10}}
\put(40,40){$\overset{(\frac{3}{2},\frac{5}{2})}{\bullet}$}\put(42,40.5){\line(1,0){10}}
\put(40,-41){$\overset{(\frac{3}{2},\frac{5}{2})}{\bullet}$}\put(42,-40.5){\line(1,0){10}}
\put(50,40){$\overset{(2,2)}{\bullet}$}\put(52,40.5){\line(1,0){10}}
\put(50,-41){$\overset{(2,2)}{\bullet}$}\put(52,-40.5){\line(1,0){10}}
\put(60,40){$\overset{(\frac{5}{2},\frac{3}{2})}{\bullet}$}\put(62,40.5){\line(1,0){10}}
\put(60,-41){$\overset{(\frac{5}{2},\frac{3}{2})}{\bullet}$}\put(62,-40.5){\line(1,0){10}}
\put(70,40){$\overset{(3,1)}{\bullet}$}\put(72,40.5){\line(1,0){10}}
\put(70,-41){$\overset{(3,1)}{\bullet}$}\put(72,-40.5){\line(1,0){10}}
\put(80,40){$\overset{(\frac{7}{2},\frac{1}{2})}{\bullet}$}\put(82,40.5){\line(1,0){10}}
\put(80,-41){$\overset{(\frac{7}{2},\frac{1}{2})}{\bullet}$}\put(82,-40.5){\line(1,0){10}}
\put(90,40){$\overset{(4,0)}{\bullet}$}
\put(90,-41){$\overset{(4,0)}{\bullet}$}
\put(11.5,45){$\vdots$}
\put(21.5,45){$\vdots$}
\put(31.5,45){$\vdots$}
\put(41.5,45){$\vdots$}
\put(51.5,45){$\vdots$}
\put(61.5,45){$\vdots$}
\put(71.5,45){$\vdots$}
\put(81.5,45){$\vdots$}
\put(91.5,45){$\vdots$}
\put(11.5,-45){$\vdots$}
\put(21.5,-45){$\vdots$}
\put(31.5,-45){$\vdots$}
\put(41.5,-45){$\vdots$}
\put(51.5,-45){$\vdots$}
\put(61.5,-45){$\vdots$}
\put(71.5,-45){$\vdots$}
\put(81.5,-45){$\vdots$}
\put(91.5,-45){$\vdots$}
\put(10,0.5){\line(1,0){42}}\put(50,0.5){\vector(1,0){42}}
\put(16.5,32){$\vdots$}
\put(16.5,29){$\vdots$}
\put(16.5,26){$\vdots$}
\put(16.5,23){$\vdots$}
\put(16.5,20){$\vdots$}
\put(16.5,17){$\vdots$}
\put(16.5,14){$\vdots$}
\put(16.5,11){$\vdots$}
\put(16.5,9){$\vdots$}
\put(16.5,6){$\vdots$}
\put(16.5,3){$\vdots$}
\put(16.5,1.5){$\cdot$}
\put(16.5,0){$\cdot$}
\put(16.5,-32){$\vdots$}
\put(16.5,-29){$\vdots$}
\put(16.5,-26){$\vdots$}
\put(16.5,-23){$\vdots$}
\put(16.5,-20){$\vdots$}
\put(16.5,-17){$\vdots$}
\put(16.5,-14){$\vdots$}
\put(16.5,-11){$\vdots$}
\put(16.5,-9){$\vdots$}
\put(16.5,-6){$\cdot$}
\put(14.5,-3){$-\frac{7}{2}$}
\put(21.5,27){$\vdots$}
\put(21.5,24){$\vdots$}
\put(21.5,21){$\vdots$}
\put(21.5,18){$\vdots$}
\put(21.5,15){$\vdots$}
\put(21.5,13){$\vdots$}
\put(21.5,9){$\vdots$}
\put(21.5,6){$\vdots$}
\put(21.5,3){$\vdots$}
\put(21.5,1.5){$\cdot$}
\put(21.5,0){$\cdot$}
\put(21.5,-27){$\vdots$}
\put(21.5,-24){$\vdots$}
\put(21.5,-21){$\vdots$}
\put(21.5,-18){$\vdots$}
\put(21.5,-15){$\vdots$}
\put(21.5,-13){$\vdots$}
\put(21.5,-9){$\vdots$}
\put(21.5,-6){$\vdots$}
\put(19.5,-3){$-3$}
\put(26.5,22){$\vdots$}
\put(26.5,19){$\vdots$}
\put(26.5,16){$\vdots$}
\put(26.5,13){$\vdots$}
\put(26.5,10){$\vdots$}
\put(26.5,7){$\vdots$}
\put(26.5,4){$\vdots$}
\put(26.5,1){$\vdots$}
\put(26.5,-22){$\vdots$}
\put(26.5,-19){$\vdots$}
\put(26.5,-16){$\vdots$}
\put(26.5,-13){$\vdots$}
\put(26.5,-10){$\vdots$}
\put(26.5,-7){$\vdots$}
\put(26.5,-4){$\vdots$}
\put(26.5,-5){$\cdot$}
\put(24.5,-3){$-\frac{5}{2}$}
\put(31.5,17){$\vdots$}
\put(31.5,14){$\vdots$}
\put(31.5,11){$\vdots$}
\put(31.5,8){$\vdots$}
\put(31.5,5){$\vdots$}
\put(31.5,2){$\vdots$}
\put(31.5,0.5){$\cdot$}
\put(31.5,-17){$\vdots$}
\put(31.5,-14){$\vdots$}
\put(31.5,-11){$\vdots$}
\put(31.5,-8){$\vdots$}
\put(31.5,-5){$\vdots$}
\put(29.5,-3){$-2$}
\put(36.5,12){$\vdots$}
\put(36.5,9){$\vdots$}
\put(36.5,6){$\vdots$}
\put(36.5,3){$\vdots$}
\put(36.5,1.5){$\cdot$}
\put(36.5,0){$\cdot$}
\put(36.5,-12){$\vdots$}
\put(36.5,-9){$\vdots$}
\put(36.5,-6){$\vdots$}
\put(34.5,-3){$-\frac{3}{2}$}
\put(41.5,7){$\vdots$}
\put(41.5,4){$\vdots$}
\put(41.5,1){$\vdots$}
\put(39.5,-3){$-1$}
\put(46.5,2){$\vdots$}
\put(46.5,0.5){$\cdot$}
\put(41.5,-7){$\vdots$}
\put(41.5,-4.5){$\cdot$}
\put(44.5,-2){${\scriptstyle-\frac{1}{2}}$}
\put(51,-3){$0$}
\put(56.5,2){$\vdots$}
\put(56.5,0.5){$\cdot$}
\put(56.5,-2){${\scriptstyle\frac{1}{2}}$}
\put(61.5,7){$\vdots$}
\put(61.5,4){$\vdots$}
\put(61.5,1){$\vdots$}
\put(61.5,-7){$\vdots$}
\put(61.5,-4.5){$\cdot$}
\put(61.5,-3){$1$}
\put(66.5,12){$\vdots$}
\put(66.5,9){$\vdots$}
\put(66.5,6){$\vdots$}
\put(66.5,3){$\vdots$}
\put(66.5,1.5){$\cdot$}
\put(66.5,0){$\cdot$}
\put(66.5,-12){$\vdots$}
\put(66.5,-9){$\vdots$}
\put(66.5,-6){$\vdots$}
\put(66.5,-3){$\frac{3}{2}$}
\put(71.5,17){$\vdots$}
\put(71.5,14){$\vdots$}
\put(71.5,11){$\vdots$}
\put(71.5,8){$\vdots$}
\put(71.5,5){$\vdots$}
\put(71.5,2){$\vdots$}
\put(71.5,0.5){$\cdot$}
\put(71.5,-17){$\vdots$}
\put(71.5,-14){$\vdots$}
\put(71.5,-11){$\vdots$}
\put(71.5,-8){$\vdots$}
\put(71.5,-5){$\vdots$}
\put(71.5,-3){$2$}
\put(76.5,22){$\vdots$}
\put(76.5,19){$\vdots$}
\put(76.5,16){$\vdots$}
\put(76.5,13){$\vdots$}
\put(76.5,10){$\vdots$}
\put(76.5,7){$\vdots$}
\put(76.5,4){$\vdots$}
\put(76.5,1){$\vdots$}
\put(76.5,-22){$\vdots$}
\put(76.5,-19){$\vdots$}
\put(76.5,-16){$\vdots$}
\put(76.5,-13){$\vdots$}
\put(76.5,-10){$\vdots$}
\put(76.5,-7){$\vdots$}
\put(76.5,-5){$\cdot$}
\put(76.5,-3){$\frac{5}{2}$}
\put(81.5,27){$\vdots$}
\put(81.5,24){$\vdots$}
\put(81.5,21){$\vdots$}
\put(81.5,18){$\vdots$}
\put(81.5,15){$\vdots$}
\put(81.5,13){$\vdots$}
\put(81.5,9){$\vdots$}
\put(81.5,6){$\vdots$}
\put(81.5,3){$\vdots$}
\put(81.5,1.5){$\cdot$}
\put(81.5,0){$\cdot$}
\put(81.5,-27){$\vdots$}
\put(81.5,-24){$\vdots$}
\put(81.5,-21){$\vdots$}
\put(81.5,-18){$\vdots$}
\put(81.5,-15){$\vdots$}
\put(81.5,-13){$\vdots$}
\put(81.5,-9){$\vdots$}
\put(81.5,-6){$\vdots$}
\put(81.5,-3){$3$}
\put(86.5,32){$\vdots$}
\put(86.5,29){$\vdots$}
\put(86.5,26){$\vdots$}
\put(86.5,23){$\vdots$}
\put(86.5,20){$\vdots$}
\put(86.5,17){$\vdots$}
\put(86.5,14){$\vdots$}
\put(86.5,11){$\vdots$}
\put(86.5,9){$\vdots$}
\put(86.5,6){$\vdots$}
\put(86.5,3){$\vdots$}
\put(86.5,1.5){$\cdot$}
\put(86.5,0){$\cdot$}
\put(86.5,-32){$\vdots$}
\put(86.5,-29){$\vdots$}
\put(86.5,-26){$\vdots$}
\put(86.5,-23){$\vdots$}
\put(86.5,-20){$\vdots$}
\put(86.5,-17){$\vdots$}
\put(86.5,-14){$\vdots$}
\put(86.5,-11){$\vdots$}
\put(86.5,-9){$\vdots$}
\put(86.5,-6){$\cdot$}
\put(86.5,-3){$\frac{7}{2}$}
\put(53.8,1.7){$\cdot$}\put(54.3,2.2){$\cdot$}\put(54.8,2.7){$\cdot$}\put(55.3,3.3){$\cdot$}\put(55.8,3.8){$\cdot$}
\put(56.3,4.3){$\cdot$}
\put(52.8,-0.7){$\cdot$}\put(53.3,-1.2){$\cdot$}\put(53.8,-1.7){$\cdot$}\put(54.3,-2.3){$\cdot$}\put(54.8,-2.8){$\cdot$}
\put(55.3,-3.3){$\cdot$}
\put(58.8,6.8){$\cdot$}\put(59.3,7.3){$\cdot$}\put(59.8,7.8){$\cdot$}\put(60.3,8.3){$\cdot$}\put(60.8,8.8){$\cdot$}
\put(61.3,9.3){$\cdot$}
\put(57.8,-5.8){$\cdot$}\put(58.3,-6.3){$\cdot$}\put(58.8,-6.8){$\cdot$}\put(59.3,-7.3){$\cdot$}\put(59.8,-7.8){$\cdot$}
\put(60.3,-8.3){$\cdot$}
\put(63.8,11.8){$\cdot$}\put(64.3,12.3){$\cdot$}\put(64.8,12.8){$\cdot$}\put(65.3,13.3){$\cdot$}\put(65.8,13.8){$\cdot$}
\put(66.3,14.3){$\cdot$}
\put(62.8,-10.8){$\cdot$}\put(63.3,-11.3){$\cdot$}\put(63.8,-11.8){$\cdot$}\put(64.3,-12.3){$\cdot$}\put(64.8,-12.8){$\cdot$}
\put(65.3,-13.3){$\cdot$}
\put(68.8,16.8){$\cdot$}\put(69.3,17.3){$\cdot$}\put(69.8,17.8){$\cdot$}\put(70.3,18.3){$\cdot$}\put(70.8,18.8){$\cdot$}
\put(71.3,19.3){$\cdot$}
\put(67.8,-15.8){$\cdot$}\put(68.3,-16.3){$\cdot$}\put(68.8,-16.8){$\cdot$}\put(69.3,-17.3){$\cdot$}\put(69.8,-17.8){$\cdot$}
\put(70.3,-18.3){$\cdot$}
\put(33.8,21.8){$\cdot$}\put(34.3,22.3){$\cdot$}\put(34.8,22.8){$\cdot$}\put(35.3,23.3){$\cdot$}\put(35.8,23.8){$\cdot$}
\put(36.3,24.3){$\cdot$}
\put(32.3,-21.8){$\cdot$}\put(32.8,-22.3){$\cdot$}\put(33.3,-22.8){$\cdot$}\put(33.8,-23.3){$\cdot$}\put(34.3,-23.8){$\cdot$}
\put(34.8,-24.3){$\cdot$}
\put(38.8,26.8){$\cdot$}\put(39.3,27.3){$\cdot$}\put(39.8,27.8){$\cdot$}\put(40.3,28.3){$\cdot$}\put(40.8,28.8){$\cdot$}
\put(41.3,29.3){$\cdot$}
\put(37.3,-26.8){$\cdot$}\put(37.8,-27.3){$\cdot$}\put(38.3,-27.8){$\cdot$}\put(38.8,-28.3){$\cdot$}\put(39.3,-28.8){$\cdot$}
\put(39.8,-29.3){$\cdot$}
\put(43.8,31.8){$\cdot$}\put(44.3,32.3){$\cdot$}\put(44.8,32.8){$\cdot$}\put(45.3,33.3){$\cdot$}\put(45.8,33.8){$\cdot$}
\put(46.3,34.3){$\cdot$}
\put(42.3,-31.8){$\cdot$}\put(42.8,-32.3){$\cdot$}\put(43.3,-32.8){$\cdot$}\put(43.8,-33.3){$\cdot$}\put(44.3,-33.8){$\cdot$}
\put(44.8,-34.3){$\cdot$}
\put(48.8,36.8){$\cdot$}\put(49.3,37.3){$\cdot$}\put(49.8,37.8){$\cdot$}\put(50.3,38.3){$\cdot$}\put(50.8,38.8){$\cdot$}
\put(51.3,39.3){$\cdot$}
\put(47.3,-36.8){$\cdot$}\put(47.8,-37.3){$\cdot$}\put(48.3,-37.8){$\cdot$}\put(48.8,-38.3){$\cdot$}\put(49.3,-38.8){$\cdot$}
\put(49.8,-39.3){$\cdot$}
\put(47.3,4.4){$\cdot$}\put(47.8,3.9){$\cdot$}\put(48.3,3.4){$\cdot$}\put(48.8,2.9){$\cdot$}\put(49.3,2.4){$\cdot$}
\put(49.8,1.9){$\cdot$}
\put(50.8,-0.7){$\cdot$}\put(50.3,-1.2){$\cdot$}\put(49.8,-1.7){$\cdot$}\put(49.3,-2.2){$\cdot$}\put(48.8,-2.7){$\cdot$}
\put(48.3,-3.2){$\cdot$}
\put(42.3,9.4){$\cdot$}\put(42.8,8.9){$\cdot$}\put(43.3,8.4){$\cdot$}\put(43.8,7.9){$\cdot$}\put(44.3,7.4){$\cdot$}
\put(44.8,6.9){$\cdot$}
\put(45.8,-5.8){$\cdot$}\put(45.3,-6.3){$\cdot$}\put(44.8,-6.8){$\cdot$}\put(44.3,-7.3){$\cdot$}\put(43.8,-7.8){$\cdot$}
\put(43.3,-8.3){$\cdot$}
\put(37.3,14.4){$\cdot$}\put(37.8,13.9){$\cdot$}\put(38.3,13.4){$\cdot$}\put(38.8,12.9){$\cdot$}\put(39.3,12.4){$\cdot$}
\put(39.8,11.9){$\cdot$}
\put(40.8,-10.8){$\cdot$}\put(40.3,-11.3){$\cdot$}\put(39.8,-11.8){$\cdot$}\put(39.3,-12.3){$\cdot$}\put(38.8,-12.8){$\cdot$}
\put(38.3,-13.3){$\cdot$}
\put(32.3,19.4){$\cdot$}\put(32.8,18.9){$\cdot$}\put(33.3,18.4){$\cdot$}\put(33.8,17.9){$\cdot$}\put(34.3,17.4){$\cdot$}
\put(34.8,16.9){$\cdot$}
\put(35.8,-15.8){$\cdot$}\put(35.3,-16.3){$\cdot$}\put(34.8,-16.8){$\cdot$}\put(34.3,-17.3){$\cdot$}\put(33.8,-17.8){$\cdot$}
\put(33.3,-18.3){$\cdot$}
\put(67.3,24.4){$\cdot$}\put(67.8,23.9){$\cdot$}\put(68.3,23.4){$\cdot$}\put(68.8,22.9){$\cdot$}\put(69.3,22.4){$\cdot$}
\put(69.8,21.9){$\cdot$}
\put(71.3,-21.8){$\cdot$}\put(70.8,-22.3){$\cdot$}\put(70.3,-22.8){$\cdot$}\put(69.8,-23.3){$\cdot$}\put(69.3,-23.8){$\cdot$}
\put(68.8,-24.3){$\cdot$}
\put(62.3,29.4){$\cdot$}\put(62.8,28.9){$\cdot$}\put(63.3,28.4){$\cdot$}\put(63.8,27.9){$\cdot$}\put(64.3,27.4){$\cdot$}
\put(64.8,26.9){$\cdot$}
\put(66.3,-26.8){$\cdot$}\put(65.8,-27.3){$\cdot$}\put(65.3,-27.8){$\cdot$}\put(64.8,-28.3){$\cdot$}\put(64.3,-28.8){$\cdot$}
\put(63.8,-29.3){$\cdot$}
\put(57.3,34.4){$\cdot$}\put(57.8,33.9){$\cdot$}\put(58.3,33.4){$\cdot$}\put(58.8,32.9){$\cdot$}\put(59.3,32.4){$\cdot$}
\put(59.8,31.9){$\cdot$}
\put(61.3,-31.8){$\cdot$}\put(60.8,-32.3){$\cdot$}\put(60.3,-32.8){$\cdot$}\put(59.8,-33.3){$\cdot$}\put(59.3,-33.8){$\cdot$}
\put(58.8,-34.3){$\cdot$}
\put(52.3,39.4){$\cdot$}\put(52.8,38.9){$\cdot$}\put(53.3,38.4){$\cdot$}\put(53.8,37.9){$\cdot$}\put(54.3,37.4){$\cdot$}
\put(54.8,36.9){$\cdot$}
\put(56.3,-36.8){$\cdot$}\put(55.8,-37.3){$\cdot$}\put(55.3,-37.8){$\cdot$}\put(54.8,-38.3){$\cdot$}\put(54.3,-38.8){$\cdot$}
\put(53.8,-39.3){$\cdot$}
\end{picture}
\end{center}
\vspace{0.3cm}
\begin{center}\begin{minipage}{30pc}{\small {\bf Figure 1:} The system of cyclic representations $(l,\dot{l})$ of the Lorentz group, where $l=k/2$, $\dot{l}=r/2$ (cone of representations). The values of the spin lines are marked on the axis.}\end{minipage}\end{center}
\end{figure}

Consider an arbitrary proper subspace $E_\lambda$ of the energy operator $H$. As noted above, the operators $\sX_l$, $\sY_l$ of the complex shell\footnote{The operators $\sX_l=1/2i(\sA_l+i\sB_l)$, $\sY_l=1/2i(\sA_l-i\sB_l)$ ($l=1,2,3$) of the complex shell of the group algebra $\mathfrak{sl}(2,\C)\simeq\mathfrak{su}(2)\oplus i\mathfrak{su}(2)$ form the Van der Waerden representation \cite{Wa32} of the Lorentz group, which is related to the Gelfand-Naimark representation operators $F$, $G$ \cite{GMS} by the following formulas: $\sX_\pm=1/2(F_\pm-iH_\pm)$, $\sY_\pm=-1/2(F_\pm+iH_\pm)$, $\sX_3=1/2(F_3-iH_3)$, $\sY_3=-1/2(F_3+iH_3)$.} of the group algebra $\mathfrak{sl}(2,\C)$ and the energy operator $H$ commute with each other, therefore, a general system of eigenfunctions can be constructed for these operators. This means that the subspace $E_\lambda$ is invariant with respect to the operators $\sX_l$, $\sY_l$ (moreover, the operators $\sX_l$, $\sY_l$ can \textit{only} be considered on $E_\lambda$). This allows us to identify subspaces $E_\lambda$ with symmetric spaces $\Sym_{(k,r)}$ of interlocking (entangling) representations $\boldsymbol{\tau}_{k/2,r/2}$ of the Lorentz group\footnote{Finite-dimensional representations $\boldsymbol{\tau}_{l\dot{l}}$ of the Lorentz group (usually denoted as $\fD^{(j/2,r/2)}$, here $j/2=l$, $r/2=\dot{l}$) play an important role in the axiomatic quantum field theory \cite{BLOT,SW64}. For more details about interlocking representations $\boldsymbol{\tau}_{l\dot{l}}$ see \cite{GY48,AD72,PS83}.} and thereby obtain a concrete implementation (``dressing'') of the operator algebra $\pi(\fA)\Rightarrow\pi(H)$, where the cyclic representation is $\pi\equiv\boldsymbol{\tau}_{k/2,r/2}$. The structure of the representation $\boldsymbol{\tau}_{k/2,r/2}$ is defined by the following tensor product:
\begin{equation}\label{TenRep}
\boldsymbol{\tau}_{\frac{k}{2},\frac{r}{2}}=\boldsymbol{\tau}_{\frac{k}{2},0}\otimes
\boldsymbol{\tau}_{0,\frac{r}{2}}\simeq
\underbrace{\boldsymbol{\tau}_{\frac{1}{2},0}\otimes\boldsymbol{\tau}_{\frac{1}{2},0}\otimes\cdots\otimes
\boldsymbol{\tau}_{\frac{1}{2},0}}_{k\;\text{times}}\bigotimes
\underbrace{\boldsymbol{\tau}_{0,\frac{1}{2}}\otimes\boldsymbol{\tau}_{0,\frac{1}{2}}\otimes\cdots\otimes
\boldsymbol{\tau}_{0,\frac{1}{2}}}_{r\;\text{times}},
\end{equation}
where $\boldsymbol{\tau}_{\frac{1}{2},0}$($\boldsymbol{\tau}_{0,\frac{1}{2}}$) is the fundamental representation of the group $\SL(2,\C)$. Then the map
\[
\pi_\omega(\fh^{(1)})\pi_\omega(\fh^{(2)})\cdots\pi_\omega(\fh^{(n)}
\left|\omega\right\rangle\;\longmapsto\;\boldsymbol{\tau}_{\frac{k}{2},0}\otimes
\boldsymbol{\tau}_{0,\frac{r}{2}}\left|\omega\right\rangle
\]
defines a system of cyclic vectors $\left|\psi\right\rangle$ in the GNS-Hilbert space $\sH_\omega$, with $\left|\psi_0\right\rangle=\boldsymbol{\tau}_{0,0}\left|\omega\right\rangle$, where $\boldsymbol{\tau}_{0,0}$ is the unit representation of the group $\SL(2,\C)\simeq\spin_+(1,3)$. Hence, the pure separable states $\omega$ of the operator algebra $\pi(\fA)$ (the energy operator $H$) correspond to the cyclic vectors (\ref{Cycle}) in $\sH_\omega$.
\end{proof}
Within the binary structure implemented by irreducible representations of the spinor group $\spin_+(1,3)$, we have pure separable states of arbitrary spin (see Figure 1). The states of the half-integer spin form \textit{\textbf{fermionic lines}} $1/2$, $3/2$, $\ldots$ (respectively, dual fermionic lines $-1/2$, $-3/2$, $\ldots$). The states of the integer spin form \textit{\textbf{bosonic lines}} $0$, $1$, $2$, $\ldots$ (respectively, dual bosonic lines $-1$, $-2$, $\ldots$). In this case, the fermionic lines correspond to cyclic vectors $\boldsymbol{\tau}_{\frac{k}{2},0}\otimes\boldsymbol{\tau}_{0,\frac{r}{2}}\left|\omega\right\rangle$ with an odd number of factors $\boldsymbol{\tau}_{\frac{1}{2},0}$ ($\boldsymbol{\tau}_{0,\frac{1}{2}}$) in the tensor product, the bosonic lines correspond to cyclic vectors with an even tensor product. Hence, $\dZ_2$-grading naturally occurs here. Further, the Clifford algebra $\cl$ is associated with each cyclic vector. In the case of a number field $\F=\C$, $\dZ_2$-grading leads to a supergroup $G(\C_n,\gamma,\bcirc)$ implementing Cartan-Bott periodicity for $\C_n$ algebras, where the cyclic action of the supergroup is given by the Brauer-Wall group $BW_\C\simeq\dZ_2$ (for more details, see \cite{Var12,Var15c}).

The simplest spin $1/2$ fermionic states, belonging to the representation cone (see Figure 1) form the following quadruplet:
\unitlength=1mm
\begin{center}
\begin{picture}(20,20)(0,-11)
\put(0,5){$\overset{(0,\frac{1}{2})}{\bullet}$}
\put(20,5){$\overset{(\frac{1}{2},0)}{\bullet}$}
\put(0,-15){$\overset{(0,\frac{1}{2})}{\bullet}$}
\put(20,-15){$\overset{(\frac{1}{2},0)}{\bullet}$}
\put(3,6){\line(1,0){20}}
\put(3,-14){\line(1,0){20}}
\put(4,4){$\cdot$}
\put(5,3){$\cdot$}
\put(6,2){$\cdot$}
\put(7,1){$\cdot$}
\put(8,0){$\cdot$}
\put(9,-1){$\cdot$}
\put(10,-2){$\cdot$}
\put(11,-3){$\cdot$}
\put(12,-4){$\cdot$}
\put(13,-5){$\cdot$}
\put(14,-6){$\cdot$}
\put(15,-7){$\cdot$}
\put(16,-8){$\cdot$}
\put(17,-9){$\cdot$}
\put(18,-10){$\cdot$}
\put(19,-11){$\cdot$}
\put(22,4){$\cdot$}
\put(21,3){$\cdot$}
\put(20,2){$\cdot$}
\put(19,1){$\cdot$}
\put(18,0){$\cdot$}
\put(17,-1){$\cdot$}
\put(16,-2){$\cdot$}
\put(15,-3){$\cdot$}
\put(14,-4){$\cdot$}
\put(13,-5){$\cdot$}
\put(12,-6){$\cdot$}
\put(11,-7){$\cdot$}
\put(10,-8){$\cdot$}
\put(9,-9){$\cdot$}
\put(8,-10){$\cdot$}
\put(7,-11){$\cdot$}
\end{picture}
\end{center}
Since these states are formed by fundamental representations $\boldsymbol{\tau}_{\frac{1}{2},0}$ ($\boldsymbol{\tau}_{0,\frac{1}{2}}$), it is natural to assume that they correspond to an electron, but not as a particle\footnote{Rumer and Fet in the book \cite{RF70} write: ``Until now, we believed that \textit{the same particle}, for example, an electron, can be in two spin states with the spin +1/2 and -1/2. However, an electron without a certain spin value is never observed and is only an abstract concept. In view of this, another point of view is quite natural: we can assume that there are \textit{two} elementary particles -- an electron with spin +1/2 and an electron with spin -1/2, while ``just an electron'' does not occur in nature. At the same time, it is possible to preserve the concept of an electron as an abstract particle whose energy in a magnetic field is always split into two possible values'' \cite[pp.~161-162]{RF70}. Next, in Dirac we find: ``In the quantum theory, however, discontinuous transitions may take place, so that if the electron is initially in a state of positive kinetic energy it may make a transition to a state of negative kinetic energy'' \cite[p.~273]{Dir}. Thus, a certain ``abstract particle'' can be in four states: as an electron in two spin states -1/2 or +1/2, and as a positron in two spin states -1/2 or 1/2. In this example, it becomes clearly clear that the classical concept of a particle is an absolutely abstract (alien) concept at the micro-level. The electron as a particle is a fiction that exists only in human consciousness. As Erich Joos noted: ``There are no particles'' (www.decoherence.de), see also \cite{deRonde}. There are no ``particles'', there are only states.}.

\section{$\K$-Hilbert Space}
Quantum mechanics can be defined using Hilbert spaces over any of three associative division algebras: the real numbers $\R$, the complex numbers $\C$, and the quaternions $\BH$. Jordan, von Neumann, and Wigner \cite{JNW34} in their historically first classification of observable algebras considered real, complex, and quaternion quantum theories on equal grounds. However, unlike the generally accepted complex version, the real and quaternion versions of quantum mechanics face a number of problems (the absence of Stone's theorem, the tensor product of two quaternion Hilbert spaces is not a quaternion Hilbert space). It is shown in \cite{Baez} that these problems can be solved if we interpret the real, complex, and quaternion quantum theories as parts of a single structure. This structure is also known as the Freemen Dyson ``threefold way'' \cite{Dys62} (see also Arnold's mathematical ``trinities'' \cite{Arn99}).

According to Hurwitz's theorem \cite{Hur}, there are four normed division algebras: $\R$, $\C$, $\BH$ and $\dO$. Unlike the first three, the octonion algebra $\dO$ is nonassociative. For quantum mechanics, it is important that every normed division algebra is a $C^\ast$-algebra in which a real linear map
\begin{eqnarray}
\fA&\longrightarrow&\fA\nonumber\\
x&\longmapsto&x^\ast,\nonumber
\end{eqnarray}
is defined that satisfies the equalities $(xy)^\ast=y^\ast x^\ast$, $(x^\ast)^\ast=x$. For $\C$, this map is a  \textit{complex conjugate}; for $\R$, it is an identity map. For quaternions $\BH$,
 \[
(a\boldsymbol{1}+b\mathbf{i}+c\mathbf{j}+d\mathbf{k})^\ast=a\mathbf{1}-b\mathbf{i}-c\mathbf{j}-d\mathbf{k}
\]
holds\footnote{Similarly, for octonions $\dO$ the $\ast$-operation is given by the expression
\[
(a_0\mathbf{1}+a_1\mathbf{e}_1+\ldots+a_7\mathbf{e}_7)^\ast=a_0\mathbf{1}-a_1\mathbf{e}_1-\ldots-a_7\mathbf{e}_7.
\]
}.
In all cases ($\R$, $\C$, $\BH$ and $\dO$), the identity $xx^\ast=x^\ast x=\|x\|^2\mathbf{1}$ holds.

For three associative division algebras ($\R$, $\C$, $\BH$), the structure of the $C^\ast$-algebra allows us to define a Hilbert space. Let $\K$ be an associative normed division algebra. Then a $\K$-\textit{\textbf{vector space}} will be a right $\K$-module if there exists such an Abelian group $V$ equipped with a map
\begin{eqnarray}
V\times\K&\longrightarrow&V\nonumber\\
(v,X)&\longmapsto&vx,\nonumber
\end{eqnarray}
satisfying the conditions
\[
(v+w)(x)=vx+wx,\quad v(x+y)=vx+vy,\quad (vx)y=v(xy).
\]
In this case, the mapping $T:\;V\rightarrow V^\prime$ between $\K$-vector spaces is $\K$-\textit{\textbf{linear}} if
\[
T(vx+wy)=T(v)x+T(w)y
\]
for all $v,w\in V$ and $x,y\in\K$. Further, the \textit{\textbf{inner product}} on a $\K$-vector space is defined by a map
\begin{eqnarray}
V\times V&\longrightarrow&\K\nonumber\\
(v,w)&\longmapsto&\langle v\mid w\rangle,\nonumber
\end{eqnarray}
that is $\K$-linear in the second argument and satisfies the equality $\langle v\mid w\rangle=\langle w\mid v\rangle^\ast$ for all $v,w\in V$, and is also \textit{\textbf{positive-definite}}: $(v,v)\geq 0$. As usual, the inner product defines the norm $\|v\|=\sqrt{\langle v\mid v\rangle}$, which translates the vector space into a complete metric space. Then a $\K$-vector space with an inner product is a $\K$-\textit{\textbf{Hilbert space}}.
\begin{thm}
Let the $C^\ast$-algebra $\fA$ consists of the energy operator $H$ and the generators of the group $\SU(2,2)$ attached to $H$, forming a general system of eigenfunctions with $H$. And let the set of pure separable states $\omega$ on the algebra $\fA$ correspond to a system of cyclic vectors $\left|\psi\right\rangle$ in the GNS-Hilbert space $\sH_\omega$. Then the $\K$-linear structure ($\K=\R,\C,\BH$) translates $\sH_\omega$ into a physical $\K$-Hilbert space $\bsH_{\rm phys}(\K)$, in which three basis sectors are allocated:\\
1) Charged state sector $\bsH_{\rm phys}(\C)$.\\
2) Neutral state sector $\bsH_{\rm phys}(\BH)$.\\
3) Truly neutral state sector $\bsH_{\rm phys}(\R)$.
\end{thm}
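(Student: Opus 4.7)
The plan is to take Theorem 3 as the starting point: every cyclic vector $\left|\psi_n\right\rangle$ in the tower (\ref{Cycle}) is constructed by acting on $\left|\omega\right\rangle$ with a tensor product of fundamental representations $\boldsymbol{\tau}_{\frac{1}{2},0}$ and $\boldsymbol{\tau}_{0,\frac{1}{2}}$ of $\spin_+(1,3)\simeq\SL(2,\C)$. As recalled in the introduction, each such fundamental factor is an element of the minimal left ideal of a quaternion-type Clifford algebra (biquaternion $\C_2\simeq\C\otimes\BH$ over $\C$, and $\cl_{0,2}$, $\cl_{1,1}$, $\cl_{2,0}$ over $\R$), so $\left|\psi_n\right\rangle$ is naturally attached to a tensor product of such algebras together with its minimal left ideal.

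First, I would invoke the Wedderburn--Artin classification of the division rings of minimal left ideals of real Clifford algebras $\cl_{p,q}$: the ring $\K$ is determined by $p-q\pmod 8$ and takes exactly one of the three values $\R$, $\C$, $\BH$. Applied to the Clifford algebras attached to cyclic vectors in (\ref{Cycle}), this partitions the family of cyclic vectors into three disjoint subfamilies indexed by $\K$. Denote by $\sH_\omega(\K)\subset\sH_\omega$ the closed linear span of those cyclic vectors whose underlying minimal left ideal has division ring $\K$; this yields a direct-sum decomposition $\sH_\omega=\sH_\omega(\R)\oplus\sH_\omega(\C)\oplus\sH_\omega(\BH)$.

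Second, I would identify each $\sH_\omega(\K)$ with one of the three physical sectors via the pseudo-automorphism $\cA\mapsto\overline{\cA}$ of the Clifford algebra that implements the charge conjugation $C$ (\cite{Var01,Var04,Var15}). On cyclic vectors with $\K\simeq\C$ the operator $C$ acts as a non-trivial antilinear map pairing each particle with a distinct antiparticle, producing the charged sector $\bsH_{\rm phys}(\C)$; for $\K\simeq\BH$ the map $C$ squares to $-1$ and exchanges members of an $\SU(2)$-doublet, producing the neutral sector $\bsH_{\rm phys}(\BH)$; for $\K\simeq\R$ the map $C$ squares to $+1$ and acts trivially up to a phase, producing the truly neutral (Majorana) sector $\bsH_{\rm phys}(\R)$. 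This is precisely the Dyson threefold way \cite{Dys62,Baez} applied to the cyclic vectors of the GNS representation, and it matches the trichotomy announced in the abstract.

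Finally, I would check that the $\K$-linear structure set up just before the theorem is compatible with the GNS inner product restricted to $\sH_\omega(\K)$: the restricted pairing must be $\K$-linear in the second argument, obey the $\K$-Hermitian symmetry $\langle v\mid w\rangle=\langle w\mid v\rangle^\ast$ with the $\ast$-operation of $\K$, and be positive-definite. For $\K\simeq\C$ this is automatic; for $\K\simeq\R$ and $\K\simeq\BH$ one uses the $\K$-valued trace form $f\,\fa^\ast\fb\,f$ on the minimal left ideal generated by the primitive idempotent $f$, which extends by bilinearity to an inner product on $\sH_\omega(\K)$ and agrees, after embedding $\K\hookrightarrow\C$ (or choosing a complex structure in the quaternionic case), with the restriction of the $\C$-valued GNS pairing. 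Setting $\bsH_{\rm phys}(\K):=\sH_\omega(\K)$ with this induced $\K$-inner product then yields a $\K$-Hilbert space in the sense defined above the theorem. I expect the main obstacle to be this last step: namely, verifying completeness of $\bsH_{\rm phys}(\K)$ in the $\K$-norm and matching the induced pairing with the GNS pairing canonically, since in the quaternionic case the absence of a Stone-type theorem (alluded to in the preamble of the theorem) forces careful treatment of the anti-unitary structure that underlies the identification $\K\simeq\BH$.
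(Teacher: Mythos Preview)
Your proposal follows essentially the same route as the paper: start from Theorem~3, attach to each cyclic vector the minimal left ideal of a Clifford algebra $\cl_{p,q}$, invoke the mod~8 classification of division rings $\K\simeq\R,\C,\BH$, and then identify the three sectors via the action of the pseudo-automorphism $\cA\mapsto\overline{\cA}$ (charge conjugation), ending with the Dyson threefold way. The paper does exactly this, with two differences of emphasis worth noting. First, the paper inserts an explicit table of $\K\otimes\K$-transitions for tensor products of spinspaces $\dS_{2^{n_1}}(\K_1)\otimes\dS_{2^{n_2}}(\K_2)$, and uses it to stress that ``the number of states is not fixed in any of the three sectors'' --- states migrate between sectors under fusion. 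This makes your clean direct-sum decomposition $\sH_\omega=\sH_\omega(\R)\oplus\sH_\omega(\C)\oplus\sH_\omega(\BH)$ slightly sharper than what the paper actually claims; the paper treats the $\K$-structure as an emergent label on individual cyclic vectors rather than as a fixed orthogonal splitting of $\sH_\omega$. Second, your final step --- verifying that the GNS inner product restricts to a genuine $\K$-valued Hermitian form and that completeness survives --- is not carried out in the paper at all; the paper simply asserts that the presence of a $\K$-linear structure ``translates'' $\sH_\omega$ into a $\K$-Hilbert space, without checking the axioms. So your extra concern there is well-placed but goes beyond what the paper's own proof attempts.
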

\begin{proof}
According to theorem 3, the pure separable state $\omega$ of an operator algebra $\pi(\fA)$ is given by a cyclic vector $\boldsymbol{\tau}_{\frac{k}{2},0}\otimes\boldsymbol{\tau}_{0,\frac{r}{2}}\left|\omega\right\rangle$ in the GNS-Hilbert space $\sH_\omega$. Representation (\ref{TenRep}) acts in the symmetric space
\[
\mbox{\small Sym}_{(k,r)}\!\!=\!\underbrace{\mbox{\small Sym}_{(1,0)}\otimes\mbox{\small Sym}_{(1,0)}\otimes\cdots\otimes\mbox{\small Sym}_{(1,0)}}_{k\;\text{times}}\bigotimes
\underbrace{\mbox{\small Sym}_{(0,1)}\otimes\mbox{\small Sym}_{(0,1)}\otimes\cdots\otimes\mbox{\small Sym}_{(0,1)}}_{r\;\text{times}}.
\]
Vectors of the space $\Sym_{(k,r)}$ are \textit{symmetric spintensors} resulting from the operation of symmetrization of the general spintensors of the form
\[
\boldsymbol{S}=\boldsymbol{s}^{\alpha_1\alpha_2\ldots\alpha_k\dot{\alpha}_1\dot{\alpha}_2\ldots
\dot{\alpha}_r}=\sum \boldsymbol{s}^{\alpha_1}\otimes
\boldsymbol{s}^{\alpha_2}\otimes\cdots\otimes
\boldsymbol{s}^{\alpha_k}\otimes
\boldsymbol{s}^{\dot{\alpha}_1}\otimes
\boldsymbol{s}^{\dot{\alpha}_2}\otimes\cdots\otimes
\boldsymbol{s}^{\dot{\alpha}_r},
\]
which, in turn, are vectors of spinspace
\[
\dS_{2^{k+r}}=\dS_{2^k}\otimes\dot{\dS}_{2^r}\simeq
\underbrace{\dS_2\otimes\dS_2\otimes\cdots\otimes\dS_2}_{k\;\text{times}}\bigotimes
\underbrace{\dot{\dS}_2\otimes\dot{\dS}_2\otimes\cdots\otimes\dot{\dS}_2}_{r\;\text{times}}.
\]
Any pair of substitutions
\[
\alpha=\begin{pmatrix} 1 & 2 & \ldots & k\\
\alpha_1 & \alpha_2 & \ldots & \alpha_k\end{pmatrix},\quad \beta=\begin{pmatrix} 1 & 2 & \ldots & r\\
\dot{\alpha}_1 & \dot{\alpha}_2 & \ldots & \dot{\alpha}_r\end{pmatrix}
\]
defines a transformation $(\alpha,\beta)$ that maps $\boldsymbol{S}$ to the following polynomial:
\[
P_{\alpha\beta}\boldsymbol{S}=\boldsymbol{s}^{\alpha\left(\alpha_1\right)\alpha\left(\alpha_2\right)\ldots
\alpha\left(\alpha_k\right)\beta\left(\dot{\alpha}_1\right)\beta\left(\dot{\alpha}_2\right)\ldots
\beta\left(\dot{\alpha}_r\right)}.
\]
A spintensor $\boldsymbol{S}$ is called a \emph{symmetric spintensor} if the equality
$
P_{\alpha\beta}\boldsymbol{S}=\boldsymbol{S}
$
holds for any $\alpha$, $\beta$. The space $\Sym_{(k,r)}$ of symmetric spintensors has dimension
$
\dim\Sym_{(k,r)}=(k+1)(r+1)
$.
The dimension of space $\Sym_{(k,r)}$ is called the \emph{degree of representation} $\boldsymbol{\tau}_{l\dot{l}}$ of the group $\SL(2,\C)$. It is easy to see that $\SL(2,\C)$ has representations of \textbf{\emph{any degree}} (as opposed to $\SU(3)$, $\SU(6)$ and other groups of internal symmetries).

Further, the spinspace $\dS_{2^{n/2}}$ is the minimal left ideal $I_{p,q}$ of the Clifford algebra $\cl_{p,q}$\footnote{This definition follows from the algebraic theory of Chevalley \cite{Che54}. By virtue of the Wedderburn-Artin theorem, for a Clifford algebra $\cl_{p,q}$ over a field $\F=\R$, there is an isomorphism $\cl_{p,q}\simeq\End_{\K}(I_{p,q})\simeq\Mat_{2^m}(\K)$, where $m=\frac{p+q}{2}$, $I_{p,q}=\cl_{p,q}f$ is the minimal left ideal of the algebra $\cl_{p,q}$, and $\K=f\cl_{p,q}f$ is the division ring for $\cl_{p,q}$. Hence, if $\cl_{p,q}$ is a simple algebra, then the map $\cl_{p,q}\overset{\gamma}{\longrightarrow}\End_{\K}(\dS)$,
$u\longrightarrow\gamma(u)$, $\gamma(u)\psi=u\psi$ gives an irreducible representation of the algebra $\cl_{p,q}$ in the spinspace $\dS_{2^m}(\K)\simeq I_{p,q}=\cl_{p,q}f$, where $\psi\in\dS_{2^m}$, $m=\frac{p+q}{2}$. On the other hand, if $\cl_{p,q}$ is a semisimple algebra, then the map $\cl_{p,q}\overset{\gamma}{\longrightarrow}\End_{\K\oplus\hat{\K}}(\dS\oplus\hat{\dS})$, $u\longrightarrow\gamma(u)$, $\gamma(u)\psi=u\psi$ gives a reducible representation of the algebra $\cl_{p,q}$ in the double spinspace $\dS\oplus\hat{\dS}$, where $\hat{\dS}=\{\hat{\psi}|\psi\in\dS\}$. In this case, the ideal $\dS\oplus\hat{\dS}$ has a right $\K\oplus\hat{\K}$-linear structure, $\hat{\K}=\{\hat{\lambda}|\lambda\in\K\}$, and $\K\oplus\hat{\K}$ is isomorphic to a double real division ring $\R\oplus\R$, if $p-q\equiv 1\pmod{8}$, or a double quaternion division ring $\BH\oplus\BH$ if $p-q\equiv 5\pmod{8}$. The map $\gamma$ defines the so-called {\it left-regular} spinor representation of the algebra $\cl(Q)$ in the spinspaces $\dS$ and $\dS\oplus\hat{\dS}$, respectively.
}:
\[
\dS_{2^{n/2}}=I_{p,q}=\cl_{p,q}f_{p,q},
\]
where $f_{p,q}$ is the primitive idempotent of the algebra $\cl_{p,q}$, $n=p+q$. Clifford algebras $\cl_{p,q}$ over the field $\F=\R$ are subdivided into eight different types with the following ring structure.\\[0.3cm]
{\bf I}. Central-simple algebras.
\begin{description}
\item[1] Two types $p-q\equiv 0,2\pmod{8}$ with a real division ring $\K\simeq\R$.
\item[2] Two types $p-q\equiv 3,7\pmod{8}$ with a complex division ring $\K\simeq\C$.
\item[3] Two types $p-q\equiv 4,6\pmod{8}$ with a quaternion division ring $\K\simeq\BH$.
\end{description}
{\bf II}. Semisimple algebras.
\begin{description}
\item[4] Type $p-q\equiv 1\pmod{8}$ with a double real division ring $\K\simeq\R\oplus\R$.
\item[5] Type $p-q\equiv 5\pmod{8}$ with a double quaternion division ring $\K\simeq\BH\oplus\BH$.
\end{description}
Accordingly, a $\K$- or $\K\oplus\hat{\K}$-linear structure\footnote{In the case of odd-dimensional algebras $\cl_{p,q}$ with rings $\K\oplus\K$ (types $p-q\equiv 1,5\pmod{8}$), there are homomorphic maps $\epsilon:\,\cl_{p,q}\rightarrow\cl_{p,q-1}$, $\epsilon:\,\cl_{p,q}\rightarrow\cl_{q,p-1}$, where the quotient algebras have the form
${}^\epsilon\cl_{p,q-1}\simeq\cl_{p,q}/\Ker\epsilon$, ${}^\epsilon\cl_{q,p-1}\simeq\cl_{p,q}/\Ker\epsilon$,
$\Ker\epsilon=\left\{\cA^1-\omega\cA^1\right\}$ is the kernel of the homomorphism $\epsilon$. In this case, the double ideals $\dS\oplus\hat{\dS}$ can be replaced with quotient ideals ${}^\epsilon\dS$ and go to quotient representations ${}^\epsilon\pi$ (for more details, see \cite{Var01,Var04}).} is associated with each cyclic vector $\left|\psi\right\rangle$ of the space $\sH_\omega$, depending on the tensor dimension of the representation $\pi\equiv\boldsymbol{\tau}_{k/2,r/2}$. The presence of a $\K$-linear structure translates the GNS-Hilbert space $\sH_\omega$ into a $\K$-Hilbert space\footnote{According to theorem 3, the space $\sH_\omega$ is an emergent construction, i.e. a consequence of the structure of the $C^\ast$-algebra $\fA$ of observables. Similarly, the $\K$($\K\oplus\K$)-linear structure in $\sH_\omega$ is not pre-defined (as in the usual definition), but appears in an emergent way depending on the tensor structure of cyclic vectors.}.

Since the spinspace $\dS_{2^{k+r}}(\K)$ is associated with the cyclic representation $\pi\equiv\boldsymbol{\tau}_{k/2,r/2}$, we have three types of representations: real $\pi_\R$, complex $\pi_\C$ and quaternion $\pi_\BH$ cyclic representations. In this case, there are the following $\K\otimes\K$-transitions:
\begin{eqnarray}
\dS_{2^{n_1}}(\R)\otimes\dS_{2^{n_2}}(\R)&\simeq&\dS_{2^{n_1+n_2}}(\R):\;\;\R\otimes\R\rightarrow\R,\nonumber\\
\dS_{2^{n_1}}(\R)\otimes\dS_{2^{n_2}}(\BH)&\simeq&\dS_{2^{n_1+n_2}}(\BH):\;\;\R\otimes\BH\rightarrow\BH,\nonumber\\
\dS_{2^{n_1}}(\BH)\otimes\dS_{2^{n_2}}(\R)&\simeq&\dS_{2^{n_1+n_2}}(\BH):\;\;\BH\otimes\R\rightarrow\BH,\nonumber\\
\dS_{2^{n_1}}(\BH)\otimes\dS_{2^{n_2}}(\BH)&\simeq&\dS_{2^{n_1+n_2}}(\R):\;\;\BH\otimes\BH\rightarrow\R,\nonumber\\
\dS_{2^{n_1}}(\C)\otimes\dS_{2^{n_2}}(\R)&\simeq&\dS_{2^{n_1+n_2}}(\C):\;\;\C\otimes\R\rightarrow\C,\nonumber\\
\dS_{2^{n_1}}(\R)\otimes\dS_{2^{n_2}}(\C)&\simeq&\dS_{2^{n_1+n_2}}(\C):\;\;\R\otimes\C\rightarrow\C,\label{Trans}\\
\dS_{2^{n_1}}(\C)\otimes\dS_{2^{n_2}}(\BH)&\simeq&\dS_{2^{n_1+n_2}}(\C):\;\;\C\otimes\BH\rightarrow\C,\nonumber\\
\dS_{2^{n_1}}(\BH)\otimes\dS_{2^{n_2}}(\C)&\simeq&\dS_{2^{n_1+n_2}}(\C):\;\;\BH\otimes\C\rightarrow\C,\nonumber\\
\dS_{2^{n_1}}(\C)\otimes\dS_{2^{n_2}}(\C)&\simeq&\dS_{2^{n_1+n_2}}(\C):\;\;\C\otimes\C\rightarrow\C,\nonumber\\
\dS_{2^{n_1}}(\C)\otimes\dS_{2^{n_2}}(\overline{\C})&\simeq&\dS_{2^{n_1+n_2}}(\R):
\;\;\C\otimes\overline{\C}\rightarrow\R,\nonumber\\
\dS_{2^{n_1}}(\BH)\otimes\dS_{2^{n_2}}(\overline{\BH})&\simeq&\dS_{2^{n_1+n_2}}(\R):
\;\;\BH\otimes\overline{\BH}\rightarrow\R.\nonumber
\end{eqnarray}
Further, each complex Clifford algebra $\C_n=\C\otimes\cl_{p,q}$ is associated with a complex vector space $\C^n$, where $n=p+q$. The operation of selecting a real subspace $\R^{p,q}$ in the space $\C^n$ is the basis for determining the discrete transformation, known in physics as the \textit{\textbf{charge conjugation}} $C$. Any element $\cA\in\C_n$ can be uniquely represented as $\cA=\cA_1+i\cA_2$, where $\cA_1,\,\cA_2\in\cl_{p,q}$. Then the map
\[
\cA\longrightarrow\overline{\cA}=\cA_1-i\cA_2
\]
translates the algebra $\C_n$ into itself one-to-one and preserving the operations of addition and multiplication of elements $\cA$ (the operation of multiplying an element by a number passes into the operation of multiplying by a complex-conjugate number). Any mapping of an algebra $\C_n$ to itself that has the listed properties is called a {\it pseudo-automorphism} \cite{Rash}. Thus, the allocation of $\R^{p,q}$ in $\C^n$ induces pseudo-automorphism $\cA\rightarrow\overline{\cA}$ in $\C_n$. Spinor representations of the pseudo-automorphism $\cA\rightarrow\overline{\cA}$ are defined by theorem 1 in \cite{Var01} (see also \cite{Var04}).

Cyclic vectors $\left|\psi\right\rangle$ of a $\K$-Hilbert space $\bsH_{\rm phys}(\K)$ with an associated $\C$- or $\C\oplus\C$-structure define \textit{\textbf{charged states}}. Cyclic vectors $\left|\psi\right\rangle$ with a complex-conjugate $\overline{\C}$- or $\overline{\C}\oplus\overline{\C}$-structure correspond to antimatter. Further, cyclic vectors $\left|\psi\right\rangle$ with an associated $\BH$- or $\BH\oplus\BH$-structure define \textit{\textbf{neutral states}}\footnote{Since the real spinor structure appears as a result of reduction $\C_{2(k+r)}\rightarrow\cl_{p,q}$, then, as a consequence, the charge conjugation $C$ (pseudo-automorphism $\cA\rightarrow\overline{\cA}$) for algebras $\cl_{p,q}$ over the real number field $\F=\R$ and the quaternion division ring $\K\simeq\BH$ (types $p-q\equiv 4,6\pmod{8}$) is reduced to the \textit{state-antistate exchange} $C^\prime$ (theorem 1 in \cite{Var01}, see also \cite{Var04,Var14}).}. As in the case of $\K\simeq\C$, cyclic vectors $\left|\psi\right\rangle$ with a conjugate $\overline{\BH}$- or $\overline{\BH}\oplus\overline{\BH}$-structure correspond to antimatter. In turn, cyclic vectors $\left|\psi\right\rangle$ with an associated $\R$- or $\R\oplus\R$-structure define \textit{\textbf{truly neutral states}}\footnote{In the case of truly neutral states, the pseudo-automorphism $\cA\rightarrow\overline{\cA}$ is reduced to the identity transformation (the state coincides with its anti-state).}.

Thus, the physical $\K$-Hilbert space $\bsH_{\rm phys}(\K)$ is divided into three sectors: $\bsH_{\rm phys}(\C)$, $\bsH_{\rm phys}(\BH)$ and $\bsH_{\rm phys}(\R)$. However, the number of states is not fixed in any of the three sectors. According to (\ref{Trans}), states from one sector move to another, thus forming a single structure (similar to the Dyson ``threefold way''\footnote{According to \cite{Baez}, the Dyson ``threefold way'' \cite{Dys62} can be implemented within the framework of category theory. Let $\Hilb_\R$, $\Hilb_\C$, and $\Hilb_\BH$ be categories of real, complex, and quaternion Hilbert spaces, respectively. Then there are functors that translate the categories $\Hilb_\R$ and $\Hilb_\BH$ to $\Hilb_\C$. And there are also functors that map $\Hilb_\C$ and $\Hilb_\BH$ to $\Hilb_\R$ (respectively, $\Hilb_\R$ and $\Hilb_\C$ to $\Hilb_\BH$). It follows that none of the three forms of quantum mechanics is ``privileged'': each contains the other two.}). Within a single ternary structure, each cyclic vector $\left|\psi\right\rangle\in\bsH_{\rm phys}(\K)$ has a tensor structure (energy, mass) and a $\K$-linear structure (charge), and the combination of these two structures leads to a dynamic change in charge and mass.
\end{proof}

\section{Coherent Subspaces}
In this section, we will continue to study the structure of the physical $\K$-Hilbert space $\bsH_{\rm phys}(\K)$.

As is known, Hilbert spaces of any objects can be represented by a subspace of the space of the tensor product of two-dimensional Hilbert spaces:
\[
\sH_m\subseteq\bsT_n=\underset{n}{\bigotimes}\bsH_2,\quad m\leq 2^n.
\]
According to theorem 3, the fundamental cyclic representation $\pi\equiv\boldsymbol{\tau}_{1/2,0}$ ($\boldsymbol{\tau}_{0,1/2}$) of the algebra $\fA$ of observables (the energy operator $H$ and the generators of the group $\SU(2,2)$ attached to $H$) is defined in a two-dimensional GNS-Hilbert space $\sH_2$. Further, it follows from theorem 4 that the fundamental cyclic representations $\pi_\K$, where $\K=\R,\,\C,\,\BH$, are in turn defined in two-dimensional $\K$-Hilbert spaces $\bsH_2(\K)$. The fundamental pure separable states of the $C^\ast$-algebra $\fA$ of observables are given by cyclic vectors $\left|\psi\right\rangle$ of spaces $\bsH_2(\K)$. The vectors $\left|\psi\right\rangle$ of representations $\pi_\K$ are associated with two-dimensional spinspaces $\dS_2(\K)$, which are left minimal ideals of algebras
\begin{eqnarray}
\cl_{0,2}&\simeq&\BH,\;\;\;\;\quad p-q\equiv 6\pmod{8}\;-\;\text{quaternion algebra};\nonumber\\
\cl_{2,0}&\simeq&\R(2),\quad p-q\equiv 2\pmod{8}\;-\;\text{anti-quaternion algebra};\nonumber\\
\cl_{1,1}&\simeq&\R(2),\quad p-q\equiv 0\pmod{8}\;-\;\text{pseudo-quaternion algebra}\nonumber
\end{eqnarray}
for the case of a number field $\F=\R$\footnote{Here we use Rosenfeld's terminology \cite{Roz95}. In \cite{Roz95}, it is shown that there is a wide variety of different geometries over three associative division algebras: $\R$, $\C$, and $\BH$.}. In the case of $\F=\C$, we come to the biquaternion algebra $\C_2$, which is a complexification of the algebras $\cl_{0,2}$, $\cl_{2,0}$, and $\cl_{1,1}$\footnote{The quaternion algebras $\cl_{0,2}$, $\cl_{2,0}$, and $\cl_{1,1}$ are the simplest ``building blocks'' of a binary structure, since the algebras $\cl_{0,1}$ and $\cl_{1,0}$ are isomorphic to number fields. So, the algebra $\cl_{0,1}$ has the type $p-q\equiv 7\pmod{8}$, hence $\cl_{0,1}\simeq\cl_{0,0}\oplus\omega\cl_{0,0}$, where $\cl_{0,0}\simeq\R$ is a field of real numbers, and the element $\omega=\mathbf{e}_1$ by virtue of $\omega^2=-1$ belongs to the center $\bZ_{0,1}$ of the algebra $\cl_{0,1}$. Thus, $\cl_{0,1}\simeq\R\oplus i\R=\C$ is a field of complex numbers. Similarly, for the algebra $\cl_{1,0}$ of type $p-q\equiv 1\pmod{8}$ the isomorphism $\cl_{1,0}\simeq\cl_{0,0}\oplus\omega\cl_{0,0}\simeq\R\oplus e\R={}^2\R$ is valid, where ${}^2\R$ is a field of \textit{double numbers}, $\omega\equiv e$, and $e$ is a double unit, $e^2=1$.}:
\begin{eqnarray}
\C_2&\simeq&\C\otimes\cl_{0,2},\nonumber\\
&\simeq&\C\otimes\cl_{2,0},\nonumber\\
&\simeq&\C\otimes\cl_{1,1}.\nonumber
\end{eqnarray}
In addition, there is an isomorphism\footnote{The structure of this isomorphism is defined as follows: the maximal basis element $\omega=\mathbf{e}_1\mathbf{e}_2\mathbf{e}_3$ of the algebra $\cl_{3,0}$ commutes with all the basis elements of this algebra, hence $\omega$ belongs to the center $\bZ_{3,0}=\{1,\omega\}$ of the algebra $\cl_{3,0}$. Since $\omega^2=-1$, $\bZ_{3,0}$ is isomorphic to the field of complex numbers, $\bZ_{3,0}=\{1,i\}\simeq\C$, which implies the isomorphism $\cl_{3,0}\simeq\C_2$.} $\C_2\simeq\cl_{3,0}$.

Let $\left|\psi\right\rangle$ be a cyclic vector of the space $\bsH_{\rm phys}(\K)$ (according to theorem 3, the vector $\left|\psi\right\rangle$ sets the pure separable state $\omega$ of the $C^\ast$-algebra $\fA$ of observables). Then $\boldsymbol{\Psi}=e^{i\alpha}\left|\psi\right\rangle$, where $\alpha$ runs through all the real numbers and $\sqrt{\left\langle\psi\right.\!\left|\psi\right\rangle}=1$, will be called a \emph{unit ray}. Hence, the unit ray  $\boldsymbol{\Psi}$ is a collection of basis cyclic vectors $\{\lambda\left|\psi\right\rangle\}$, $\lambda=e^{i\alpha}$, $\left|\psi\right\rangle\in\bsH_{\rm phys}(\K)$. The values associated with the observed effects are absolute values of the semi-linear form $|\left\langle\psi_1\right.\!\left|\psi_2\right\rangle|^2$, independent of the parameters $\lambda$ that characterize the ray. Thus, the \textit{\textbf{ray space}} is quotient space $\hat{H}=\bsH_{\rm phys}(\K)/S^1$, that is, the projective space of one-dimensional subspaces from $\bsH_{\rm phys}(\K)$. All states of a single quantum system $\bsU$ (the spectrum of matter) are described by unit rays. Assume that the basic correspondence between physical states and elements of space $\bsH_{\rm phys}(\K)$ includes the \textit{superposition principle} of quantum theory, i.e. there is a set of basis states such that arbitrary states can be constructed from them using linear superpositions. However, as is known \cite{WWW52}, not all unit rays are physically realizable. There are physical restrictions (\textit{\textbf{superselection rules}}) on the implementation of the superposition principle. In 1952, Wigner, Wightman, and Wick \cite{WWW52} showed that the existence of superselection rules is related to the measurability of the relative phase of a superposition. This means that a pure state cannot be realized as a superposition of certain states, for example, there is no pure state (coherent superposition) of bosonic $\left|\Psi_b\right\rangle$ and fermionic $\left|\Psi_f\right\rangle$ states (the spin superselection rules). However, if a density matrix $\rho$ is defined in $\bsH_{\rm phys}(\K)$, then the superposition $\left|\Psi_b\right\rangle+\left|\Psi_f\right\rangle$ defines a mixed state.
\begin{thm}
A physical $\K$-Hilbert space $\bsH_{\rm phys}(\K)$ admits decomposition into a direct sum of (non-zero) coherent subspaces
\[
\bsH_{\rm phys}(\K)=\underset{\nu\in N}{\bigoplus}\bsH^\nu_{\rm phys}(\K).
\]
In this case, the superposition principle takes place in a restricted form, i.e. within coherent subspaces $\bsH^\nu_{\rm phys}(\K)$. A non-zero linear combination of cyclic vectors of pure separable states is a cyclic vector of a pure separable state, provided that all the original vectors lie in the same coherent subspace $\bsH^\nu_{\rm phys}(\K)$. A superposition of cyclic vectors of pure separable states from various coherent subspaces defines a mixed state.
\end{thm}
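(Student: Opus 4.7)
The plan is to build the coherent decomposition from the superselection structure already latent in Theorem 4 and in the $\dZ_2$-grading produced by the tensor representation (\ref{TenRep}). First I would recall that Theorem 4 partitions $\bsH_{\rm phys}(\K)$ into the three charge sectors $\bsH_{\rm phys}(\C)$, $\bsH_{\rm phys}(\BH)$, $\bsH_{\rm phys}(\R)$, corresponding to the division ring $\K$ attached to each cyclic representation $\pi_\K$. Since the $\K$-linear structure of a cyclic vector is intrinsically determined by the tensor structure of $\boldsymbol{\tau}_{k/2,r/2}$ together with the pseudo-automorphism $\cA\rightarrow\overline{\cA}$, no formal superposition of vectors drawn from two distinct $\K$-sectors can itself be assigned a single, well-defined $\K$-structure. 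This yields a first, coarse splitting of $\bsH_{\rm phys}(\K)$ into mutually incoherent components.

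Next, within each $\K$-sector I would refine the decomposition by the $\dZ_2$-grading inherited from the tensor product (\ref{TenRep}): cyclic vectors with odd total tensor degree $k+r$ lie on the fermionic lines $\pm 1/2,\pm 3/2,\ldots$, and those with even degree lie on the bosonic lines $0,\pm 1,\pm 2,\ldots$. This is the algebraic content of the spin--statistics superselection rule of Wigner-Wightman-Wick \cite{WWW52}. Combined with the charge-type splitting above and with the baryon and lepton labels $b,\ell$ promised for section 6, this produces a family of mutually orthogonal coherent subspaces $\bsH^\nu_{\rm phys}(\K)$ indexed by a tuple $\nu$ collecting the $\K$-type, the $\dZ_2$-parity $(-1)^{k+r}$, and the conserved charges $b,\ell$.

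To verify the three assertions of the statement I would proceed in two steps. First, show mutual orthogonality and exhaustiveness: the subspaces $\bsH^\nu_{\rm phys}(\K)$ are orthogonal because cyclic vectors with distinct $\nu$ carry distinct sets of conserved quantum numbers under $\pi_\omega(\fA)$, and exhaustiveness follows since every cyclic vector in (\ref{Cycle}) has a definite value for each label of $\nu$. Second, within a fixed $\bsH^\nu_{\rm phys}(\K)$ the restricted superposition property is immediate: any linear combination of cyclic vectors sharing the same $\K$-structure, $\dZ_2$-parity and $(b,\ell)$-labels corresponds, via the GNS construction of Theorem 1, to a pure separable state lying in the same sector.

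For the final claim I would invoke the Wigner-Wightman-Wick argument. A putative coherent superposition $\left|\Psi\right\rangle=\sum_\nu c_\nu\left|\psi_\nu\right\rangle$ with summands in several coherent subspaces produces, in every expectation $\langle\Psi\mid\pi_\omega(\fa)\mid\Psi\rangle$, only the diagonal contributions $|c_\nu|^2\langle\psi_\nu\mid\pi_\omega(\fa)\mid\psi_\nu\rangle$, because $\pi_\omega(\fA)$ preserves each $\bsH^\nu_{\rm phys}(\K)$ and so the off-diagonal terms vanish identically; thus $\left|\Psi\right\rangle$ is observationally indistinguishable from the density matrix $\rho=\sum_\nu|c_\nu|^2\left|\psi_\nu\right\rangle\left\langle\psi_\nu\right|$, that is, from a mixed state. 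The step I expect to be the main obstacle is not the decomposition itself but justifying the exhaustiveness of the label set $\nu$: one must argue that the $\K$-type, spin parity and the $(b,\ell)$-labels already saturate the superselection charges of $\fA$, and that no further hidden conserved quantity cuts across the listed sectors. Within the present framework this has to be read off directly from the explicit construction of $\fA$ as the energy operator $H$ together with the generators of $\SU(2,2)$ attached to it.
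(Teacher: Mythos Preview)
Your approach is sound but takes a genuinely different route from the paper. The paper's proof is purely abstract: it invokes the Roberts--Roepstorff transition-probability formula $|\langle\psi_1\mid\psi_2\rangle|^2=1-\tfrac14\|\omega_1-\omega_2\|^2$ on the set $PS(\fA)$ of pure states, defines \emph{sectors} as maximal non-decaying subsets of $PS(\fA)$ under the induced equivalence relation $\omega_1\sim\omega_2$, and then transports this partition to the total set $X$ of cyclic vectors in $\bsH_{\rm phys}(\K)$, taking the closed linear span of each equivalence class $X_\nu$ as the coherent subspace $\bsH^\nu_{\rm phys}(\K)$. No concrete labels are ever named in the proof: $\nu$ is simply an index for an abstract equivalence class. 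Your argument is instead constructive, assembling the sectors explicitly from the $\K$-type, the $\dZ_2$ spin parity, and the $(b,\ell)$ charges, and then appealing to the Wigner--Wightman--Wick mechanism for the mixed-state claim. The paper's abstract route has the advantage of sidestepping precisely the exhaustiveness worry you flag at the end---there is nothing to verify, since the sectors are \emph{defined} as the equivalence classes themselves. Conversely, your route is more explicit and in effect anticipates what the paper does \emph{after} Theorem~5, where a gauge group $U(1)\times U(1)$ is introduced and the abstract decomposition is identified concretely as $\bigoplus_{b,\ell}\bsH^{(b,\ell)}_{\rm phys}(\K)$.
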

\begin{proof}
The starting point of the proof is the correspondence $\omega\leftrightarrow\left|\psi\right\rangle$ between the pure separable states of the operator algebra $\fA$ and the cyclic vectors of the space $\bsH_{\rm phys}(\K)$ (theorems 3 and 4). Let $PS(\fA)$ be the set of all pure states of the $C^\ast$-algebra $\fA$ of observables and let $\left|\psi_1\right\rangle,\,\left|\psi_2\right\rangle\in\bsH_{\rm phys}(\K)$. Then, as we know \cite{RR69}, for an arbitrary $C^\ast$-algebra $\fA$, the probability of transition between two pure states $\omega_1,\,\omega_2\in PS(\fA)$ is determined by the Roberts-Roepstorff formula
\[
|\langle\psi_1\mid\psi_2\rangle|^2=\omega_1\cdot\omega_2=1-1/4\|\omega_1-\omega_2\|^2.
\]
In this case, $\omega_1\cdot\omega_2=\omega_2\cdot\omega_1$ and $\omega_1\cdot\omega_2$ is always enclosed in the segment $[0,1]$. Accordingly, $\omega_1\cdot\omega_2=1$ exactly when $\omega_1=\omega_2$. We will call two pure states $\omega_1$ and $\omega_2$ \textit{orthogonal} if the probability of transition $\omega_1\cdot\omega_2$ is zero. Accordingly, two subsets $S_1$ and $S_2$ in $PS(\fA)$ are mutually orthogonal if $\omega_1\cdot\omega_2=0$ for all $\omega_1\in S_1$ and $\omega_2\in S_2$. Further, a non-empty subset $S\in PS(\fA)$ is called \textit{non-decaying} if it cannot be split into two non-empty orthogonal subsets. Following \cite{BLOT}, we assume that every maximal non-decaying set (i.e., a non-decaying set that is not a proper subset of another non-decaying set of pure states from $PS(\fA)$) is a \textit{sector}. So, $PS(\fA)$ is divided into sectors, therefore, in $PS(\fA)$ there is an equivalence relation $\omega_1\sim\omega_2$ exactly when there is a non-decaying set in $PS(\fA)$ containing $\omega_1$ and $\omega_2$. Hence, $PS(\fA)$ is uniquely divided into pairwise disjoint and mutually orthogonal sectors that exactly coincide with the \textit{equivalence classes} in $PS(\fA)$. In fact, a sector is an algebraic counterpart of a coherent subspace. Further, let a set of cyclic vectors in a physical $\K$-Hilbert space $\bsH_{\rm phys}(\K)$, corresponding to the pure separable states of the algebra $\fA$ of observables (in our case, the energy operator $H$ and the generators of the group $\SU(2,2)$ attached to $H$), forms a total set in $\bsH_{\rm phys}(\K)$, i.e. such set $X$ whose closure of the linear shell is everywhere dense in $\bsH_{\rm phys}(\K)$. Then $X$ cannot be represented as a union of two (or more) non-empty mutually orthogonal subsets. We will say that the vectors $\left|\psi_1\right\rangle$, $\left|\psi_2\right\rangle\in X$ are connected by the relation $\left|\psi_1\right\rangle\sim\left|\psi_2\right\rangle$ if $\left|\psi_1\right\rangle$ and $\left|\psi_2\right\rangle$ belong to the linear shell of $X$. It is easy to see that the relation $\left|\psi_1\right\rangle\sim\left|\psi_2\right\rangle$ is induced by the equivalence relation $\omega_1\sim\omega_2$ from $PS(\fA)$. Hence, the relation $\left|\psi_1\right\rangle\sim\left|\psi_2\right\rangle$ is an equivalence relation and the equivalence classes in $X$ form a partition of $X$ into mutually orthogonal systems $X_\nu$, where $\{\nu\}=N$ is a certain family of indices. Taking now as $\bsH^\nu_{\rm phys}(\K)$ the closed linear shell of the set $X_\nu$, we arrive at the desired decomposition of $\bsH_{\rm phys}(\K)$ into a direct sum of mutually orthogonal subspaces $\bsH^\nu_{\rm phys}(\K)$. This implies a restricted form of the superposition principle (namely, within subspaces $\bsH^\nu_{\rm phys}(\K)$).
\end{proof}

Further, let $G=U(1)^n\equiv U(1)\times\ldots\times U(1)$ be a connected compact $n$-parametric Abelian group (\textit{gauge group}) defined in $\bsH_{\rm phys}(\K)$ by a central extension\footnote{The group $\bcE=\{(\omega,x)\}$ is called the \emph{central extension} of the group $G=\{(e,x)\}$ by means of the group $\bcK=\{(\omega,e)\}$, where $\bcK$ is an Abelian group formed by multiplying the nonequivalent phases $\omega$. Vector representations of group $\bcE$ contain all ray representations of group $G$ (for more information about the central extension technique, see \cite{BR77}).}. An arbitrary element of this group is represented by a set of $n$ phase factors:
\[
g(s_1,\ldots,s_n)\equiv(e^{i\alpha_1},\ldots,e^{i\alpha_n}),\quad 0\leq\alpha_j<2\pi.
\]
In the space $\bsH_{\rm phys}(\K)$, we define an exact unitary representation $U$ of the group $G$. The corresponding gauge transformations in $\bsH_{\rm phys}(\K)$ have the form
\[
U(g)=s^{Q_1}_1\ldots s^{Q_n}_n\equiv e^{i(\alpha_1Q_1+\ldots+\alpha_nQ_n)}.
\]
Generators $Q_1$, $\ldots$, $Q_n$ of gauge transformations are mutually commuting self-adjoint operators with an integer spectrum (they are called \textit{charges} corresponding to a given gauge group). Then, in accordance with Theorem 5, $\bsH_{\rm phys}(\K)$ is decomposed into a direct sum
\begin{equation}\label{SRules}
\bsH_{\rm phys}(\K)=\bigoplus_{q_1,\ldots,q_n\in\dZ}\bsH^{(q_1,\ldots,q_n)}_{\rm phys}(\K)
\end{equation}
of the corresponding spectral subspaces consisting of all vectors $\left|\psi\right\rangle$ such that $(Q_j-q_j)\left|\psi\right\rangle=0$, $j=1,\ldots,n$. In this case, an arbitrary non-zero cyclic vector $\left|\psi\right\rangle\in\bsH_{\rm phys}(\K)$ determines the pure separable state $\omega$ of the algebra of observables exactly when it is an eigenvector for all charges. In this way, the \textit{standard superselection rules} are given in $\bsH_{\rm phys}(\K)$, and (\ref{SRules}) is a decomposition of $\bsH_{\rm phys}(\K)$ into a direct sum of coherent subspaces $\bsH^{(q_1,\ldots,q_n)}_{\rm phys}(\K)$. According to the current state of high-energy physics, the superselection rules can be quite fully described by electric $Q$ ($=Q_1$), baryon $B$ ($=Q_2$), and lepton $L$ ($=Q_3$) charges. According to theorem 4, the electric charge is taken into account by the $\K$-linear structure of the space $\bsH_{\rm phys}(\K)$. In order to describe the entire spectrum of observed states (levels of the matter spectrum), we introduce a 2-parameter gauge group $G=U(1)^2\equiv U(1)\times U(1)$ with respect to the baryon $B$ and lepton $L$ charges. Then the decomposition of $\bsH_{\rm phys}(\K)$ into coherent subspaces has the form:
\begin{equation}\label{KS}
\bsH_{\rm phys}(\K)=\bigoplus_{b,\ell\in\dZ}\bsH^{(b,\ell)}_{\rm phys}(\K),\quad\K=\R,\C,\BH,
\end{equation}
where $b$ and $\ell$ are the baryon and lepton numbers. Consequently, the entire set of cyclic vectors of the space $\bsH_{\rm phys}(\K)$ is divided into subspaces of vectors of the form
\begin{equation}\label{Vector}
\left|\K,b,\ell,s\right\rangle=\left|\K,b,\ell,|l-\dot{l}|\right\rangle
\end{equation}
with given values of charge, spin, masss, baryon and lepton numbers. In addition, a corresponding $CPT$ group is associated with each vector (\ref{Vector}), see \cite{Var04}.

For two vectors $\left|\K,b_1,\ell_1,s_1\right\rangle$ and $\left|\K,b_2,\ell_2,s_2\right\rangle$ of the form (\ref{Vector}), we define the \textit{\textbf{fusion}} operation:
\[
\left|\K,b_1,\ell_1,s_1\right\rangle\otimes\left|\K,b_2,\ell_2,s_2\right\rangle=
\left|\K\otimes\K,b_1+b_2,\ell_1+\ell_2,s_1+s_2\right\rangle,
\]
where for the tensor product $\K\otimes\K$ we have the relations (\ref{Trans}). Consider the fusion operation for the simplest case of two neutrino states $\left|\nu\right\rangle=\left|\BH,0,1,1/2\right\rangle$ (neutrino) and $\left|\bar{\nu}\right\rangle=\left|\overline{\BH},0,-1,1/2\right\rangle$ (antineutrino\footnote{The lepton number $\ell$ for antistates is negative ($\ell=-1$). The baryon number $b$ for leptons is equal to zero. Obviously, the relation (\ref{Fusion}) is valid for any type of neutrino ($\nu_e$, $\nu_\mu$, $\nu_\tau$), since regardless of the type (flavor), each neutrino belongs to a coherent subspace $\bsH^{(0,1)}_{\rm phys}(\BH)$.}):
\begin{equation}\label{Fusion}
\left|\nu\right\rangle\otimes\left|\bar{\nu}\right\rangle=\left|\BH,0,1,\frac{1}{2}\right\rangle\otimes
\left|\overline{\BH},0,-1,\frac{1}{2}\right\rangle=\left|\BH\otimes\overline{\BH},0,0,1\right\rangle=
\left|\R,0,0,1\right\rangle=\left|\gamma\right\rangle,
\end{equation}
where $\left|\gamma\right\rangle$ is the photon\footnote{The fusion $\left|\nu\right\rangle\otimes\left|\bar{\nu}\right\rangle=\left|\gamma\right\rangle$ leads to the \textit{neutrino theory of light}, formulated by de Broglie in 1932 \cite{Bro32} (long before the experimental discovery of neutrinos in 1956), in which the photon is represented as a bound state of neutrino and antineutrino. De Broglie proposed the existence of primary spin 1/2 particles, which he called ``corpuscles'', union (fusion) which allows us to obtain a particle of any spin (fusion theory \cite{Bro}). This idea was later used by Bargmann and Wigner \cite{BW48} to construct irreducible representations of the Lorentz group and the corresponding relativistic equations for particles with arbitrary spin (the Bargmann-Wigner formalism). Heisenberg's non-linear unified spinor theory of matter \cite{Heisen3} is also adjacent to the same direction. Thus, the construction of a unified theory covering all particles and fields originates from the idea of de Broglie to base on the simplest wave function of the spinor type, describing a particle of a minimal spin $s=1/2$. This direction is known in theoretical physics as ``spinorism''. In a sence, de Broglie corpuscles are prototypes of quarks, since in the quark model, the formation of hadrons (in the framework of $q\bar{q}$- and $qqq$-schemes) is represented as an analog of the fusion (confinement) of corpuscules-quarks (fundamental particles of spin 1/2).}. The state $\left|\gamma\right\rangle=\left|\R,0,0,1\right\rangle$ belongs to the coherent subspace $\bsH^{(0,0)}_{\rm phys}(\R)$. In formula (\ref{Fusion}), the isomorphism $\BH\otimes\overline{\BH}\simeq\R$ (resp. $\BH\otimes\BH\simeq\R$) plays a decisive role.

\section{Fusion, Doubling, and Annihilation of States}
Within the framework of binary structure, there are three basic operations for states with minimal tensor dimension:\\
1) \textit{\textbf{Fusion}}.
\[
\left|\mathfrak{q}\right\rangle\otimes\left|\bar{\mathfrak{q}}\right\rangle=\left|\BH,0,1,\frac{1}{2}\right\rangle\otimes
\left|\overline{\BH},0,-1,\frac{1}{2}\right\rangle=\left|\BH\otimes\overline{\BH},0,0,1\right\rangle=
\left|\R,0,0,1\right\rangle=\left|\gamma\right\rangle.
\]
2) \textit{\textbf{Doubling}}.
\begin{eqnarray}
\left|e^-\right\rangle&=&\left|\mathfrak{q}\right\rangle\oplus\left|\bar{\mathfrak{q}}\right\rangle=
\left|\BH\oplus i\BH,0,1,\frac{1}{2}\right\rangle=\left|\C,0,1,\frac{1}{2}\right\rangle,\nonumber\\
\left|e^+\right\rangle&=&\left|\mathfrak{q}\right\rangle\ominus\left|\bar{\mathfrak{q}}\right\rangle=
\left|\BH\ominus i\BH,0,-1,\frac{1}{2}\right\rangle=\left|\overline{\C},0,-1,\frac{1}{2}\right\rangle.\nonumber
\end{eqnarray}
3) \textit{\textbf{Annihilation}}.
\begin{multline}
\left|e^-\right\rangle\otimes\left|e^+\right\rangle=\left|\C,0,1,\frac{1}{2}\right\rangle\otimes
\left|\overline{\C},0,-1,\frac{1}{2}\right\rangle=\\
=\left|\BH\oplus i\BH,0,1,\frac{1}{2}\right\rangle\otimes\left|\BH\ominus i\BH,0,-1,\frac{1}{2}\right\rangle=
\left|\BH\otimes\BH,0,0,1\right\rangle+i\left|\BH\otimes\BH,0,0,1\right\rangle-\\
-i\left|\BH\otimes\BH,0,0,1\right\rangle+\left|\BH\otimes\BH,0,0,1\right\rangle=2\left|\R,0,0,1\right\rangle=
2\left|\gamma\right\rangle.\nonumber
\end{multline}
The latter operation is an algebraic analog of the process of annihilation of an electron-positron pair into two $\gamma$-quanta: $e^-e^+\rightarrow 2\gamma$.

We show that all the cyclic vectors of the physical $\K$-Hilbert space, which represent the pure separable states of the matter spectrum, can be determined by the above-mentioned fusion and doubling operations. To this end, we will need to recall some information about the factorization and periodicity of Clifford algebras.

Let $\cl(V,Q)$ be a Clifford algebra over a field $\F=\R$, where $V$ is a vector space equipped with a quadratic form $Q=x^2_1+\ldots+x^2_p-\ldots-x^2_{p+q}$. If $p+q$ is even and $\omega^2=1$, then $\cl(V,Q)$ is called {\it positive} and, respectively, {\it negative} if $\omega^2=-1$, i.e. $\cl_{p,q}>0$ if $p-q\equiv 0,4\pmod{8}$ and $\cl_{p,q}<0$ if $p-q\equiv 2,6\pmod{8}$.
\begin{thm}[{\rm Karoubi \cite{Karo}}]\label{tKaroubi}
1) If $\cl(V,Q)>0$ and $\dim V$ is even, then
\[
\cl(V\oplus V^{\prime},Q\oplus
Q^{\prime})\simeq\cl(V,Q)\otimes\cl(V^{\prime},Q^{\prime}).
\]
2) If $\cl(V,Q)<0$ and $\dim V$ is even, then
\[
\cl(V\oplus V^{\prime},Q\oplus
Q^{\prime})\simeq\cl(V,Q)\otimes\cl(V^{\prime},-Q^{\prime}).
\]
\end{thm}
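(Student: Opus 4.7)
The plan is to construct an explicit algebra isomorphism using the universal property of the Clifford algebra. The central idea is to use the volume element $\omega = \e_1\e_2\cdots\e_n$ of $\cl(V,Q)$ (with $n=\dim V$) as a ``twist'': multiplying vectors of $V'$ by $\omega$ makes them anticommute with vectors of $V$ inside the tensor product $\cl(V,Q)\otimes\cl(V',Q'')$, where $Q''=Q'$ in case 1) and $Q''=-Q'$ in case 2).

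First I would establish the anticommutation lemma: since $n=\dim V$ is even, a direct computation in $\cl(V,Q)$ gives $\e_i\omega=(-1)^{n-1}\omega\e_i=-\omega\e_i$, and by linearity $v\omega=-\omega v$ for every $v\in V$. This is the only place where the evenness of $\dim V$ enters, and it is the structural reason the hypothesis appears in the statement.

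Next I would define a linear map
\[
\phi:V\oplus V'\longrightarrow\cl(V,Q)\otimes\cl(V',Q''),\qquad \phi(v,v')=v\otimes 1+\omega\otimes v',
\]
and verify the Clifford relation $\phi(v,v')^2=(Q\oplus Q')(v,v')\cdot 1$. Expanding the square gives four terms; the two cross terms combine to $(v\omega+\omega v)\otimes v'=0$ by Step 1, while the remaining terms yield $Q(v)\otimes 1+\omega^2\otimes v'^2$. In case 1), $\omega^2=1$ and $v'^2=Q'(v')$ in $\cl(V',Q')$, giving $Q(v)+Q'(v')$. In case 2), $\omega^2=-1$ but $v'^2=-Q'(v')$ in $\cl(V',-Q')$, and the two signs cancel to yield again $Q(v)+Q'(v')$. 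This is precisely the trick that forces the sign flip on $Q'$ in the negative case. By the universal property, $\phi$ extends to an algebra homomorphism $\tilde\phi:\cl(V\oplus V',Q\oplus Q')\to\cl(V,Q)\otimes\cl(V',Q'')$.

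Finally I would show $\tilde\phi$ is an isomorphism. The image contains $\e_i\otimes 1=\phi(\e_i,0)$ for each basis vector of $V$, hence contains $\omega\otimes 1$; since $\omega^2=\pm 1$, the element $\omega\otimes 1$ is a unit, so $1\otimes v'=(\omega\otimes 1)^{-1}(\omega\otimes v')$ also lies in the image. Therefore $\tilde\phi$ is surjective, and since both sides have real dimension $2^{\dim V+\dim V'}$, it is an isomorphism. The main obstacle is really just keeping track of the two independent sign conventions (the parity of $\omega^2$ and the sign in front of $Q'$); once the anticommutation lemma is in place the rest is bookkeeping, and the universal-property argument is standard.
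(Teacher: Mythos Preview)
Your argument is correct and is the standard proof of Karoubi's theorem via the universal property. The paper itself does not supply a proof: Theorem~6 is simply stated with a citation to Karoubi's book and then used to obtain the factorization~(\ref{Factor}), so there is no in-paper argument to compare against. Your approach---twisting the $V'$-generators by the volume element $\omega$ so that the cross terms vanish, then invoking the universal property and a dimension count---is exactly the argument one finds in the cited reference, and your handling of the two sign cases ($\omega^2=\pm 1$ versus the sign of $Q'$) is accurate.
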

Using Karoubi's theorem, we obtain the following factorization for the algebra $\cl_{p,q}$ ($p-q\equiv 0\pmod{2}$):
\begin{equation}\label{Factor}
\cl_{p,q}\simeq\underbrace{\cl_{s_i,t_j}\otimes\cl_{s_i,t_j}\otimes\cdots
\otimes\cl_{s_i,t_j}}_{(p+q)/2\;\text{times}},
\end{equation}
where $s_i,t_j\in\{0,1,2\}$.
\begin{thm} All cyclic vectors $\left|\psi\right\rangle$ of the physical $\K$-Hilbert space $\bsH_{\rm phys}(\K)$, which define the fermionic and bosonic states of $\R$-, $\C$-, and $\BH$-subspaces, are determined by a composition of fusion and doubling operations from the active $\left|\mathfrak{q}_a\right\rangle$ and inert $\left|\mathfrak{q}_s\right\rangle$ fundamental states.
\end{thm}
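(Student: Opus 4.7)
The plan is to reduce an arbitrary cyclic vector to a word built from two kinds of fundamental letters, using Theorem~3, Theorem~4, and Karoubi's factorization (Theorem~6) as the three main engines. First I would invoke Theorem~3 to write any cyclic vector $\left|\psi\right\rangle \in \bsH_{\rm phys}(\K)$ in the form $\pi_\omega(\fh^{(1)})\cdots\pi_\omega(\fh^{(n)})\left|\omega\right\rangle$, and identify this via (\ref{TenRep}) with the action of $\boldsymbol{\tau}_{k/2,0}\otimes\boldsymbol{\tau}_{0,r/2}$ on the spinspace $\dS_{2^{k+r}}\simeq\dS_2^{\otimes k}\otimes\dot{\dS}_2^{\otimes r}$. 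Thus every pure separable state decomposes into an ordered product of fundamental states attached to the two-dimensional ideals of quaternion-type algebras. This is the algebraic incarnation of the fusion operation introduced in Section~6.

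Next I would apply Karoubi's theorem to the ambient real Clifford algebra $\cl_{p,q}$ of dimension $p+q=k+r$, yielding the factorization (\ref{Factor}) into $\lfloor(p+q)/2\rfloor$ elementary factors drawn from the three quaternionic building blocks $\cl_{0,2}\simeq\BH$, $\cl_{2,0}\simeq\R(2)$, $\cl_{1,1}\simeq\R(2)$. The minimal left ideal of $\cl_{0,2}$ carries the division ring $\K\simeq\BH$ and provides the active fundamental state $\left|\fq_a\right\rangle=\left|\BH,0,1,\tfrac12\right\rangle$, while the minimal left ideals of $\cl_{2,0}$ and $\cl_{1,1}$ carry $\K\simeq\R$ and provide the inert fundamental state $\left|\fq_s\right\rangle=\left|\R,0,0,\tfrac12\right\rangle$. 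The $\K$-content of the composite state assembled from (\ref{Factor}) is then computed factor-by-factor from the $\K\otimes\K$ table (\ref{Trans}): strings built from $\left|\fq_a\right\rangle$ and its conjugate lie in the $\BH$- or $\R$-sector according to parity ($\BH\otimes\BH\simeq\R$, $\BH\otimes\overline{\BH}\simeq\R$, $\R\otimes\BH\simeq\BH$), and the same dictionary controls the baryon/lepton tracking through the coherent-subspace decomposition (\ref{KS}).

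Access to the charged sector $\bsH_{\rm phys}(\C)$ is supplied by the doubling operation, using the complexification $\C_2\simeq\C\otimes\cl_{0,2}\simeq\C\otimes\cl_{2,0}\simeq\C\otimes\cl_{1,1}$ noted in Section~6: the sums $\left|\fq_a\right\rangle\oplus\left|\bar{\fq}_a\right\rangle$ and $\left|\fq_a\right\rangle\ominus\left|\bar{\fq}_a\right\rangle$ realize the passage $\BH\oplus i\BH\to\C$ and $\BH\ominus i\BH\to\overline{\C}$ exhibited by $\left|e^-\right\rangle$ and $\left|e^+\right\rangle$. Repeated fusion with further fundamental letters then populates the full cone of Figure~1; the parity of the number of factors of $\boldsymbol{\tau}_{1/2,0}$ and $\boldsymbol{\tau}_{0,1/2}$ places the resulting vector on a fermionic line (odd) or a bosonic line (even), in agreement with the $\dZ_2$-grading induced by the Brauer--Wall group. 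A short induction on $k+r$, using Karoubi's factorization at each step to peel off one quaternionic factor, shows that every cyclic vector of the form (\ref{Vector}) is reached.

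The main obstacle I expect is bookkeeping rather than conceptual: one must verify that the fusion rules (\ref{Trans}) applied recursively along the factorization (\ref{Factor}) produce exactly the $\K$-label prescribed for the target coherent subspace $\bsH^{(b,\ell)}_{\rm phys}(\K)$, while the additive charges $b$, $\ell$ accumulate correctly over the tensor factors and switch sign under the antistate exchange $C'$. The delicate points are the semisimple types $p-q\equiv 1,5\pmod 8$, where $\K\oplus\hat\K$-structures appear and must be reduced via the quotient maps $\epsilon$ mentioned earlier, and the consistent assignment of ``active'' versus ``inert'' to the two possible seeds so that the doubling formula indeed yields the charged sector without overproducing states forbidden by the superselection rules.
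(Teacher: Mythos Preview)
Your strategy is essentially the paper's: invoke Theorem~3 to realize cyclic vectors as tensor words in fundamental representations, then use Karoubi's factorization~(\ref{Factor}) to split the associated $\cl_{p,q}$ into two-dimensional blocks $\cl_{0,2}$, $\cl_{2,0}$, $\cl_{1,1}$ identified with the active and inert seeds, track the resulting $\K$-label through the table~(\ref{Trans}), and finally reach the charged sector by doubling. The paper replaces your induction on $k+r$ with an explicit tabulation for $m=2,3,4$ followed by an appeal to Cartan--Bott periodicity $\cl_{p+8,q}\simeq\cl_{p,q}\otimes\cl_{8,0}$, but this is a presentational difference only.

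There is, however, a genuine gap in your treatment of the $\C$-sector. You produce charged states solely through the doubling $\BH\oplus i\BH\to\C$ of active seeds, exemplified by $\left|e^\pm\right\rangle$, and then hope that fusion with further fundamental letters sweeps out all of $\bsH_{\rm phys}(\C)$. The paper instead identifies the doubling mechanism intrinsically at every level via the isomorphism $\C_{p+q-1}\simeq\cl_{p,q}$ for $p-q\equiv 3,7\pmod{8}$ and the decomposition $\cl_{p,q}\simeq\cl_{q,p-1}\oplus\cl_{q,p-1}$, with the central idempotents $\lambda_\pm=\tfrac12(1\pm\omega)$ playing the role of helicity projectors. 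This yields \emph{two} species of charge: $\C\simeq\BH\oplus i\BH$ (doubling of active states, $p-q\equiv 3\pmod 8$) and $\C\simeq\R\oplus i\R$ (doubling of inert states, $p-q\equiv 7\pmod 8$). Your scheme as written omits the second type entirely, so it does not yet account for charged states whose underlying odd-dimensional algebra has $p-q\equiv 7\pmod 8$; you would need to add the analogous doubling $\left|\fq_s\right\rangle\oplus i\left|\fq_s\right\rangle$ and verify that together the two doublings exhaust the complex sector. Your worry about the semisimple types $p-q\equiv 1,5\pmod 8$ is, by contrast, somewhat misplaced: the paper does not treat them inside this proof at all, having already disposed of $\K\oplus\hat\K$-structures via the quotient maps $\epsilon$ in the discussion surrounding Theorem~4.
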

\begin{proof}
According to Theorem 3, the \textit{fermionic states} $F$ correspond to cyclic vectors $\boldsymbol{\tau}_{\frac{k}{2},0}\otimes\boldsymbol{\tau}_{0,\frac{r}{2}}\left|\omega\right\rangle$ with an odd number of cofactors  $\boldsymbol{\tau}_{\frac{1}{2},0}$ (resp. $\boldsymbol{\tau}_{0,\frac{1}{2}}$) in the tensor product. In turn, the \textit{bosonic states} $B$ correspond to cyclic vectors with an even number of cofactors. Therefore,
\begin{equation}\label{State}
\underbrace{\boldsymbol{\tau}_{\frac{1}{2},0}\otimes\boldsymbol{\tau}_{\frac{1}{2},0}\otimes\cdots\otimes
\boldsymbol{\tau}_{\frac{1}{2},0}\bigotimes
\boldsymbol{\tau}_{0,\frac{1}{2}}\otimes\boldsymbol{\tau}_{0,\frac{1}{2}}\otimes\cdots\otimes
\boldsymbol{\tau}_{0,\frac{1}{2}}}_{m\;\text{times}}\Rightarrow\left\{\begin{array}{lc}
F,& m\equiv 1\pmod{2};\\
B, & m\equiv 0\pmod{2}.
\end{array}\right.
\end{equation}
The two-dimensional Hilbert space $\bsH_2(\C)$, in which the fundamental cyclic representation $\boldsymbol{\tau}_{\frac{1}{2},0}$ (resp. $\boldsymbol{\tau}_{0,\frac{1}{2}}$), i.e. the ``basic building blocks'' of the structure (\ref{State}), is equivalent to the spinspace $\dS_2(\C)$, which, in turn, is the minimal left ideal of the biquaternion algebra $\C_2$. On the other hand, the algebra $\C_2$ is isomorphic to the complexifications of the real subalgebras $\cl_{2,0}$, $\cl_{1,1}$, $\cl_{0,2}$. Accordingly, the spinspace $\dS_2(\C)$ is isomorphic to the complexifications of minimal left ideals of real subalgebras\footnote{The minimal left ideal of the subalgebra $\cl_{2,0}$ has the form $I_{2,0}=\cl_{2,0}f_{20}$, where $f_{20}=\frac{1}{2}(1+\mathbf{e}_1)$ is the primitive idempotent for $\cl_{2,0}$, the corresponding division ring $\K\simeq f_{20}\cl_{2,0}f_{20}\simeq\{1\}\simeq\R$ is isomorphic to the field of real numbers. For the subalgebra $\cl_{1,1}$ we have $I_{1,1}=\cl_{1,1}f_{11}$, where $f_{11}=\frac{1}{2}(1+\mathbf{e}_{12})$, $\K\simeq f_{11}\cl_{1,1}f_{11}\simeq\{1\}\simeq\R$. Similarly, the minimal left ideal of the subalgebra $\cl_{0,2}$ is given by the expression $I_{0,2}=\cl_{0,2}f_{02}$, where $f_{02}=1$, and the quaternion division ring has the form $\K\simeq f_{02}\cl_{0,2}f_{02}\simeq\{1,\mathbf{e}_1,\mathbf{e}_2,\mathbf{e}_{12}\}\simeq\{1,\mathbf{i},\mathbf{j},\mathbf{k}\}\simeq\BH$. It should be noted that the choice of primitive idempotents of algebras $\cl_{p,q}$ is ambiguous and is determined by the number of commuting elements of the algebra $\cl_{p,q}$ forming a finite group of order $2^k$, where $k=q-r_{q-p}$, here $r_{q-p}$ are Radon-Hurwitz numbers whose values form a cycle with a period of 8: $r_{i+8}=r_i+4$ (for more details, see \cite{Lou,Var04}).}: $\dS_2(\C)\simeq\C\otimes I_{2,0}$, $\dS_2(\C)\simeq\C\otimes I_{1,1}$, $\dS_2(\C)\simeq\C\otimes I_{0,2}$. In addition, for the algebra $\C_2$, the isomorphism $\C_2\simeq\cl_{3,0}$ holds, where $\cl_{3,0}$ is a Clifford algebra over the field $\F=\R$ with a complex division ring $\K\simeq\C$ (type $p-q\equiv 3\pmod{8}$). In turn, for the algebra $\cl_{3,0}$, the decomposition $\cl_{3,0}\simeq\cl_{0,2}\oplus\cl_{0,2}\simeq\BH\oplus\BH$ is valid (a special case of the general isomorphism $\cl_{p,q}\simeq\cl_{q,p-1}\oplus\cl_{q,p-1}$ for odd-dimensional algebras, see \cite{Port,Var15} for more details). Further, since the maximal basis element $\omega=\mathbf{e}_1\mathbf{e}_2\mathbf{e}_3$ of the algebra $\cl_{3,0}$ belongs to the center $\mathbf{Z}_{3,0}=\{1,\omega\}\in\cl_{3,0}$ and $\omega^2=-1$, then $\mathbf{Z}_{3,0}\simeq\C$, and hence the algebra $\cl_{3,0}$ is a complexification (doubling) of the quaternion algebra $\cl_{0,2}\simeq\BH$, i.e., the biquaternion algebra $\C_2$. It follows that the complex structure (\ref{State}), which defines the charged states (according to Theorem 4), is a derivative of the neutral structure. In other words, the charge is a doubling (complexification) of the neutral structure\footnote{In this context, an electron is a complexification (doubling) of two neutrino states (neutrino and antineutrino), which is expressed in the standard formalism in the form of a Dirac bispinor. As is known \cite{BS}, the massless Dirac equation $i\gamma^\nu\partial_\nu\psi(x)=0$ in the ``split'' basis for the $\gamma$-matrices (the Weyl basis) decays into two equations $\left(\frac{\partial}{\partial x^0}\pm\boldsymbol{\sigma}\frac{\partial}{\partial\mathbf{x}}\right)\varphi_{(\pm)}(x)=0$, first proposed by Weyl in 1929 \cite{Weyl}, where the two-component functions $\varphi_\alpha$ [$\alpha=(-),(+)$] describe, respectively, the \textit{left-handed neutrino} (negative helicity, $\alpha=(-)$) and the \textit{right-handed antineutrino} (positive helicity, $\alpha=(+)$). In this case, the projection operators $P_\pm=1/2(1\pm\gamma_5)$ are the central idempotents of the Dirac algebra $\C_4$. Baez calls the absence of right-handed neutrinos and left-handed antineutrinos in nature the ``greatest mystery'' of the world \cite{Baez2}. The Weyl equations were obtained under the assumption that neutrinos have zero mass, however, according to modern data \cite{PDG}, neutrinos have non-zero mass, which depends on the variety (``flavour'') of neutrinos in the range: $m_{\nu_e}< 1,1$ eV, $m_{\nu_\mu}< 0,19$ MeV, $m_{\nu_\tau}< 18,2$ MeV. The presence of neutrino mass necessarily leads to the definition of Dirac-like equations for neutrinos, whose transformational properties (discrete symmetries) are significantly different from those for the Weyl equations. This expansion of transformational properties is in sharp contrast to the observational data. The existence of this contradiction is obviously a consequence of the attempt to describe quantum micro-objects within the framework of the continuum theory. The algebraic formulation, which is free from binding to the continuum, allows us to circumvent this contradiction.}. Thus, along with the fusion operation, we have the operation of doubling the states\footnote{According to de Broglie's neutrino theory of light \cite{Bro}, a photon is the result of the fusion of two neutrino states (neutrino and antineutrino): $\left|\gamma\right\rangle=\left|\nu\right\rangle\otimes\left|\bar{\nu}\right\rangle=\left|\BH\otimes\overline{\BH},0,0,1\right\rangle=
\left|\R,0,0,1\right\rangle$. In turn, the electron is a doubling (complexification) of the neutrino states: $\left|e\right\rangle=\left|\nu\right\rangle\oplus\left|\bar{\nu}\right\rangle=\left|\BH\oplus\overline{\BH},0,1,1/2\right\rangle=
\left|\C,0,1,1/2\right\rangle$. In this context, the electron and photon are \textit{derived states} (structures), and the \textit{truly fundamental state} is the neutrino. It is not difficult to see that all the cyclic vectors $\left|\psi\right\rangle$ of a $\K$-Hilbert space $\bsH_{\rm phys}(\K)$ can be obtained by fusion and doubling operations from the fundamental state $\left|\nu\right\rangle$. These operations define the dynamic relationship between the tensor structure and the $\K$-linear structure of the vectors $\left|\psi\right\rangle\in\bsH_{\rm phys}(\K)$, i.e., the relationship between the mass, spin, and charge of the states. In a sense, this makes it possible to generalize de Broglie's neutrino theory of light to the ``neutrino theory of everything''.}.

The fusion operation that generates state (\ref{State}) induces the tensor product $\K\otimes\K\otimes\cdots\otimes\K$ of the corresponding rings of fundamental states. According to the factorization (\ref{Factor}), any even-dimensional algebra $\cl_{p,q}$ over the field $\F=\R$ is isomorphic to the tensor product of two-dimensional algebras $\cl_{0,2}$ ($\K\simeq\BH$) and $\cl_{1,1}$, $\cl_{2,0}$ ($\K\simeq\R$). Accordingly, we have \textit{\textbf{two types of fundamental states}}\footnote{The second type corresponds to the algebras $\cl_{1,1}$, $\cl_{2,0}$ with the real ring $\K\simeq\R$. In this case, the charge conjugation $C$ (pseudo-automorphism $\cA\rightarrow\overline{\cA}$) is reduced to an identical transformation, i.e., the state coincides with its charge-conjugate state (anti-state). Thus, the second type sets the \textit{truly neutral state}, namely, the Majorana fermion of minimal spin $s=1/2$, i.e., the \textit{sterile neutrino} $\left|\nu_s\right\rangle$ (Majorana neutrino). The existence of a sterile (inert) neutrino was predicted by B.M. Pontecorvo \cite{Pont58}. Currently, an intensive search is underway for this type of neutrino, suggesting that $\left|\nu_s\right\rangle$ is the main component of the so-called ``dark matter''. In contrast to the inert neutrino $\left|\nu_s\right\rangle$, the \textit{active neutrino} $\left|\nu_a\right\rangle$ (type \textbf{I}) is divided into two types: the left-handed neutrino $\left|\nu_a\right\rangle$ and the right-handed antineutrino $\left|\bar{\nu}_a\right\rangle$.}:
\begin{description}
\item[I.] \phantom{I}$\left|\mathfrak{q}_a\right\rangle=\left|\BH,0,1,\frac{1}{2}\right\rangle$,
$\left|\bar{\mathfrak{q}}_a\right\rangle=\left|\overline{\BH},0,-1,\frac{1}{2}\right\rangle$.
\item[II.] $\left|\mathfrak{q}_s\right\rangle=\left|\R,0,0,\frac{1}{2}\right\rangle$,
$\left|\bar{\mathfrak{q}}_s\right\rangle=\left|\mathfrak{q}_s\right\rangle$.
\end{description}

Thus, according to (\ref{State}) and (\ref{Factor}), any state from $\bsH_{\rm phys}(\K)$ is a combination (fusion or doubling) active and inert fundamental states. Depending on $m\equiv 0,1\pmod{2}$ from (\ref{State}), we have fermionic or bosonic states. The fundamental states (types \textbf{I} and \textbf{II}) correspond to $m=1$. For $m=2$, the structure $\K\otimes\K$ corresponds to the algebras
\begin{eqnarray}
\cl_{4,0}&\simeq&\cl_{2,0}\otimes\cl_{0,2}\Rightarrow\R\otimes\BH\longrightarrow\BH,\nonumber\\
\cl_{3,1}&\simeq&\cl_{2,0}\otimes\cl_{1,1}\Rightarrow\R\otimes\R\longrightarrow\R,\nonumber\\
\cl_{2,2}&\simeq&\cl_{2,0}\otimes\cl_{2,0}\Rightarrow\R\otimes\R\longrightarrow\R,\nonumber\\
\cl_{1,3}&\simeq&\cl_{1,1}\otimes\cl_{0,2}\Rightarrow\R\otimes\BH\longrightarrow\BH,\nonumber\\
\cl_{0,4}&\simeq&\cl_{0,2}\otimes\cl_{2,0}\Rightarrow\BH\otimes\R\longrightarrow\BH.\nonumber
\end{eqnarray}
Further, for the structure $\K\otimes\K\otimes\K$ at $m=3$, we have\footnote{For the algebra $\cl_{4,2}\simeq\R(8)$, the decomposition $\cl_{4,2}\simeq\cl_{0,2}\otimes\cl_{0,4}\simeq\cl_{0,2}\otimes\cl_{0,2}\otimes\cl_{2,0}\simeq\BH\otimes
\BH\otimes\R(2)$ (Theorem 6) is valid, since $\BH\otimes\BH\simeq\R(4)$, then $\BH\otimes\BH\otimes\R\rightarrow\R$. Similarly for the algebras $\cl_{3,3}\simeq\R(8)$ and $\cl_{0,6}\simeq\R(8)$.}
\begin{eqnarray}
\cl_{6,0}&\simeq&\cl_{2,0}\otimes\cl_{0,2}\otimes\cl_{2,0}\Rightarrow
\R\otimes\BH\otimes\R\longrightarrow\BH,\nonumber\\
\cl_{5,1}&\simeq&\cl_{2,0}\otimes\cl_{1,1}\otimes\cl_{0,2}\Rightarrow
\R\otimes\R\otimes\BH\longrightarrow\BH,\nonumber
\end{eqnarray}
\begin{eqnarray}
\cl_{4,2}&\simeq&\cl_{2,0}\otimes\cl_{2,0}\otimes\cl_{2,0}\Rightarrow
\R\otimes\R\otimes\R\longrightarrow\R,\nonumber\\
&\simeq&\cl_{1,1}\otimes\cl_{2,0}\otimes\cl_{1,1}\Rightarrow
\R\otimes\R\otimes\R\longrightarrow\R,\nonumber\\
&\simeq&\cl_{0,2}\otimes\cl_{0,2}\otimes\cl_{2,0}\Rightarrow
\BH\otimes\BH\otimes\R\longrightarrow\R,\nonumber
\end{eqnarray}
\begin{eqnarray}
\cl_{3,3}&\simeq&\cl_{2,0}\otimes\cl_{2,0}\otimes\cl_{1,1}\Rightarrow
\R\otimes\R\otimes\R\longrightarrow\R,\nonumber\\
&\simeq&\cl_{0,2}\otimes\cl_{1,1}\otimes\cl_{0,2}\Rightarrow
\BH\otimes\R\otimes\BH\longrightarrow\R,\nonumber
\end{eqnarray}
\begin{eqnarray}
\cl_{2,4}&\simeq&\cl_{2,0}\otimes\cl_{2,0}\otimes\cl_{0,2}\Rightarrow
\R\otimes\R\otimes\BH\longrightarrow\BH,\nonumber\\
&\simeq&\cl_{1,1}\otimes\cl_{1,1}\otimes\cl_{0,2}\Rightarrow
\R\otimes\R\otimes\BH\longrightarrow\BH,\nonumber
\end{eqnarray}
\begin{eqnarray}
\cl_{1,5}&\simeq&\cl_{1,1}\otimes\cl_{0,2}\otimes\cl_{2,0}\Rightarrow
\R\otimes\BH\otimes\R\longrightarrow\BH,\nonumber\\
\cl_{0,6}&\simeq&\cl_{0,2}\otimes\cl_{2,0}\otimes\cl_{0,2}\Rightarrow
\BH\otimes\R\otimes\BH\longrightarrow\R.\nonumber
\end{eqnarray}
At $m=4$ for $\K\otimes\K\otimes\K\otimes\K$ we get
\begin{eqnarray}
\cl_{8,0}&\simeq&\cl_{2,0}\otimes\cl_{0,2}\otimes\cl_{2,0}\otimes\cl_{0,2}
\Rightarrow\R\otimes\BH\otimes\R\otimes\BH\longrightarrow\R,\nonumber\\
\cl_{7,1}&\simeq&\cl_{2,0}\otimes\cl_{1,1}\otimes\cl_{0,2}\otimes\cl_{2,0}
\Rightarrow\R\otimes\R\otimes\BH\otimes\R\longrightarrow\BH,\nonumber
\end{eqnarray}
\begin{eqnarray}
\cl_{6,2}&\simeq&\cl_{2,0}\otimes\cl_{2,0}\otimes\cl_{2,0}\otimes\cl_{0,2}
\Rightarrow\R\otimes\R\otimes\R\otimes\BH\longrightarrow\BH,\nonumber\\
&\simeq&\cl_{1,1}\otimes\cl_{2,0}\otimes\cl_{1,1}\otimes\cl_{0,2}
\Rightarrow\R\otimes\R\otimes\R\otimes\BH\longrightarrow\BH,\nonumber\\
&\simeq&\cl_{0,2}\otimes\cl_{0,2}\otimes\cl_{2,0}\otimes\cl_{0,2}
\Rightarrow\BH\otimes\BH\otimes\R\otimes\BH\longrightarrow\BH,\nonumber
\end{eqnarray}
\begin{eqnarray}
\cl_{5,3}&\simeq&\cl_{2,0}\otimes\cl_{2,0}\otimes\cl_{2,0}\otimes\cl_{1,1}
\Rightarrow\R\otimes\R\otimes\R\otimes\R\longrightarrow\R,\nonumber\\
\cl_{4,4}&\simeq&\cl_{2,0}\otimes\cl_{2,0}\otimes\cl_{2,0}\otimes\cl_{2,0}
\Rightarrow\R\otimes\R\otimes\R\otimes\R\longrightarrow\R,\nonumber\\
&\simeq&\cl_{1,1}\otimes\cl_{2,0}\otimes\cl_{2,0}\otimes\cl_{1,1}
\Rightarrow\R\otimes\R\otimes\R\otimes\R\longrightarrow\R,\nonumber\\
&\simeq&\cl_{0,2}\otimes\cl_{2,0}\otimes\cl_{2,0}\otimes\cl_{0,2}
\Rightarrow\BH\otimes\R\otimes\R\otimes\BH\longrightarrow\R,\nonumber
\end{eqnarray}
\begin{eqnarray}
\cl_{3,5}&\simeq&\cl_{2,0}\otimes\cl_{2,0}\otimes\cl_{1,1}\otimes\cl_{0,2}
\Rightarrow\R\otimes\R\otimes\R\otimes\BH\longrightarrow\BH,\nonumber\\
\cl_{2,6}&\simeq&\cl_{2,0}\otimes\cl_{2,0}\otimes\cl_{0,2}\otimes\cl_{2,0}
\Rightarrow\R\otimes\R\otimes\BH\otimes\R\longrightarrow\BH,\nonumber\\
&\simeq&\cl_{1,1}\otimes\cl_{1,1}\otimes\cl_{0,2}\otimes\cl_{2,0}
\Rightarrow\R\otimes\R\otimes\BH\otimes\R\longrightarrow\BH,\nonumber\\
\cl_{1,7}&\simeq&\cl_{1,1}\otimes\cl_{0,2}\otimes\cl_{2,0}\otimes\cl_{0,2}
\Rightarrow\R\otimes\BH\otimes\R\otimes\BH\longrightarrow\R,\nonumber\\
\cl_{0,8}&\simeq&\cl_{0,2}\otimes\cl_{2,0}\otimes\cl_{0,2}\otimes\cl_{2,0}
\Rightarrow\BH\otimes\R\otimes\BH\otimes\R\longrightarrow\R.\nonumber
\end{eqnarray}
For $m=5,6,\ldots$, the explicit form of the structure $\K\otimes\K\otimes\cdots\otimes\K$ is determined by the Cartan-Bott periodicity\footnote{In 1908, Cartan \cite{Car08} identified the Clifford algebras as matrix algebras with entries in $\R$, $\C$, $\BH$, $\R\oplus\R$, $\BH\oplus\BH$ and found a periodicity of 8. Cartan's periodicity of 8 for Clifford algebras is often attributed to Bott, who proved his periodicity of homotopy groups of rotation groups in 1959 \cite{Bot59}.} for algebras $\cl_{p,q}$ over the field $\F=\R$ (for more details, see \cite{Var15}). The Clifford algebra over the field $\F=\R$ is modulo 8 periodic: $\cl_{p+8,q}\simeq\cl_{p,q}\otimes\cl_{8,0}$ ($\cl_{p,q+8}\simeq\cl_{p,q}\otimes\cl_{0,8}$). So, at $m=5$ for the algebra $\cl_{10,0}\simeq\R(32)$ we have the decomposition
\[
\cl_{10,0}\simeq\cl_{2,0}\otimes\cl_{8,0}\simeq\R(2)\otimes\R(16)\Rightarrow
\R\otimes\BH\otimes\R\otimes\BH\otimes\R\longrightarrow\R
\]
that corresponds to
\[
\cl_{2,0}\otimes\cl_{0,2}\otimes\cl_{2,0}\otimes\cl_{0,2}\otimes\cl_{2,0}\longrightarrow
\left|\mathfrak{q}_s\right\rangle\otimes\left|\mathfrak{q}_a\right\rangle\otimes\left|\mathfrak{q}_s\right\rangle
\otimes\left|\mathfrak{q}_a\right\rangle\otimes\left|\mathfrak{q}_s\right\rangle
\]
and so on. Thus, all the states that make up the $\BH$- and $\R$-subspaces of the physical $\K$-Hilbert space can be constructed by means of the fusion operation from the fundamental states.

The states that make up the $\C$-subspaces are determined by the doubling operation. In this case, the key role is played by the isomorphism $\C_{p+q-1}\simeq\cl_{p,q}$, where $p-q\equiv 3,7\pmod{8}$. A special case is the isomorphism $\C_2\simeq\cl_{3,0}$ discussed above. In turn, for $p-q\equiv 3,7\pmod{8}$, for the odd-dimensional algebra $\cl_{p,q}$ the decomposition\footnote{For the algebras of the form $\cl_{0,q}$, we have $\cl_{0,q}\simeq\cl_{0,q-1}\oplus\cl_{0,q-1}$.}
\begin{equation}\label{Decomp}
\cl_{p,q}\simeq\cl_{q,p-1}\oplus\cl_{q,p-1},
\end{equation}
is valid, which can be represented by the following scheme:
\[
\unitlength=0.5mm
\begin{picture}(70,50)
\put(35,40){\vector(2,-3){15}} \put(35,40){\vector(-2,-3){15}}
\put(28.25,42){$\cl_{p,q}$} \put(16,28){$\lambda_{+}$}
\put(49.5,28){$\lambda_{-}$} \put(9.5,9.20){$\cl_{q,p-1}$}
\put(47.75,9){$\cl_{q,p-1}$} \put(32.5,10){$\oplus$}
\end{picture}
\]
Here, the central idempotents\footnote{According to \cite{CF96}, the idempotents $\lambda_+$ and $\lambda_-$ can be identified with the projection operators of helicity, which distinguish between left- and right-polarized spinors.}
\[
\lambda_+=\frac{1+\e_1\e_2\cdots\e_{p+q}}{2},\quad\lambda_-=\frac{1-\e_1\e_2\cdots\e_{p+q}}{2}
\]
satisfy the realtions $(\lambda_+)^2=\lambda_+$, $(\lambda_-)^2=\lambda_-$, $\lambda_+\lambda_-=0$. For $p-q\equiv 3\pmod{8}$, the isomorphism (\ref{Decomp}) leads to the decomposition of the algebra $\cl_{p,q}$ into two subalgebras $\cl_{q,p-1}$ with a quaternion division ring $\K\simeq\BH$:
\[
\cl_{p,q}\simeq\cl_{q,p-1}\oplus\cl_{q,p-1}\simeq\BH\left(2^{\frac{p+q-3}{2}}\right)\oplus
\BH\left(2^{\frac{p+q-3}{2}}\right)\longrightarrow\BH\oplus\BH.
\]
Further, for $p-q\equiv 7\pmod{8}$, the isomorphism (\ref{Decomp}) leads to the decomposition of the algebra $\cl_{p,q}$ into two subalgebras $\cl_{q,p-1}$ with a real division ring $\K\simeq\R$:
\[
\cl_{p,q}\simeq\cl_{q,p-1}\oplus\cl_{q,p-1}\simeq\R\left(2^{\frac{p+q-1}{2}}\right)\oplus
\R\left(2^{\frac{p+q-1}{2}}\right)\longrightarrow\R\oplus\R.
\]
In both cases, the maximal basis element $\omega=\e_1\e_2\cdots\e_{p+q}$ of the algebras $\cl_{p,q}$ belongs to the center $\mathbf{Z}_{p,q}=\{1,\omega\}$. Since $\omega^2=-1$, then $\mathbf{Z}_{p,q}\simeq\C$, and hence the algebras $\cl_{p,q}$ are complexifications (doublings) of their subalgebras $\cl_{q,p-1}$ with rings $\K\simeq\BH$ and $\K\simeq\R$: $\BH\oplus i\BH$ ($p-q\equiv 3\pmod{8}$) and $\R\oplus i\R$ ($p-q\equiv 7\pmod{8}$). Thus, we have \textit{two types of charge}: 1) $\C\simeq\BH\oplus i\BH$ -- \textit{doubling of active states}; 2) $\C\simeq\R\oplus i\R$ -- \textit{doubling of inert states}.
\end{proof}

\section{Summary}
The main premise of the algebraic formulation of quantum theory is the possibility of constructing a theory without involving any classical analogies and related visual images and mechanical models. The construction of a quantum theory should be carried out exclusively by means of its mathematical apparatus. Any macroscopic analogies introduced from classical physics should be discarded. As for subatomic physics, the construction of the theory should mainly rely on the group-theoretical (symmetric) method.

According to the algebraic interpretation, the set of observables forms an algebra $\fA$, in which the operation of multiplying the observables is defined and their linear superpositions are given. The explicit relationship between the algebra $\fA$ and the measurement data is given by the concept of the state $\omega$, by which the expected value $\omega(\fa)$ of the observable $\fa\in\fA$ can be determined. The canonical correspondence $\omega\leftrightarrow\pi_\omega$ between states and cyclic representations of the $C^\ast$-algebra $\fA$ is given by the GNS (Gelfand-Naimark-Segal) construction. Following Heisenberg \cite{Heisen51}, we believe that at the fundamental level, the main observable is the energy that corresponds to the Hermitian operator $H$. The group $\SU(2,2)$ is chosen as the fundamental symmetry that allows structuring the energy levels of the state spectrum. Thus, the $C^\ast$-algebra $\fA$ consists of the energy operator $H$ and the generators of the group $\SU(2,2)$ attached to $H$, forming a common system of eigenfunctions with $H$. The set $\Omega$ of pure separable states $\omega$ on the algebra $\fA$ corresponds to a system of cyclic vectors of the form
\begin{equation}\label{Cycle2}
\pi_\omega(\fh^{(1)})\pi_\omega(\fh^{(2)})\cdots\pi_\omega(\fh^{(n)}
\left|\omega\right\rangle\;\longmapsto\;\boldsymbol{\tau}_{\frac{k}{2},0}\otimes
\boldsymbol{\tau}_{0,\frac{r}{2}}\left|\omega\right\rangle
\end{equation}
in the $\K$-Hilbert space $\bsH(\K)$ (see Theorems 3 and 4), where $\fh^{(i)}\in\fA$, $i=1,2,\ldots,n$; $\pi_\omega\simeq\boldsymbol{\tau}_{l\dot{l}}$ is a representation of the spinor group $\spin_+(1,3)$, $\K=\R,\C,\BH$. The physical $\K$-Hilbert space $\bsH_{\rm phys}(\K)$ (spectrum of matter (energy)) is formed by cyclic vectors (\ref{Cycle2}), for which the mass $m_\omega$ of the pure separable state $\omega$ is determined by the formula
$
m_{\omega}=m_e\left(l+1/2\right)(\dot{l}+1/2),
$
here $m_e$ is the rest mass of the electron (for more details, see \cite{Var17}). Thus, the mass (energy) of states $\omega$ is given by the tensor structure of cyclic vectors $\left|\psi\right\rangle\in\bsH_{\rm phys}(\K)$. The charge of the state $\omega$ is defined within the $\K$-linear structure of the space $\bsH_{\rm phys}(\K)$: $\K\simeq\C$ -- charged states, $\K\simeq\BH$ -- neutral states, $\K\simeq\R$ -- truly neutral states. In contrast to the mechanical definition of the spin of a quantum micro-object ($q\bar{q}$-meson or $qqq$-baryon) of the quark model, in the algebraic formulation we have a non-classical (group-theoretic) definition of this most important characteristic: $s=l-\dot{l}$. The entire spectrum of states is divided into a sequence of spin lines, along which the states have the same spin, but different mass (tensor structure). Other important characteristics of the state (along with mass, charge and spin) are discrete symmetries and associated quantum numbers ($P$-parity, $C$-parity, etc.). And here, in contrast to the definitions that depend on the angular moments of the mechanical model of quarks, the algebraic approach is more universal and does not depend on any classical (macroscopic) definitions. Namely, the Clifford algebra $\cl$ is associated with each cyclic vector (\ref{Cycle}). In the case of a number field $\F=\C$, eight automorphisms are defined for the Clifford algebra $\C_n$  \cite{Rash,Var03} (including the identical automorphism $\Id$)\footnote{In 1955, Rashevsky \cite{Rash} showed that there are \textit{four fundamental automorphisms} of the algebra $\C_n$: $\cA\rightarrow\cA$ (identity), $\cA\rightarrow\cA^\star$ (involution), $\cA\rightarrow\widetilde{\cA}$ (reversion) and $\cA\rightarrow\widetilde{\cA^\star}$ (conjugation), where $\cA$ is an arbitrary element of the algebra $\C_n$. The group structure of the set of automorphisms $\{\Id,\,\star,\,\widetilde{\phantom{cc}},\,\widetilde{\star}\}$ with respect to discrete transformations that make up the $PT$ group (the so-called \textit{reflection group}) was studied in \cite{Var99,Var01}. Along with the fundamental automorphisms of the algebra $\C_n$ there is a \textit{pseudo-automorphism} $\cA\rightarrow\overline{\cA}$ (for more details, see \cite{Rash}), which is not fundamental, but its composition with the fundamental automorphisms allows us to extend the set $\{\Id,\,\star,\,\widetilde{\phantom{cc}},\,\widetilde{\star}\}$ by means of pseudo-automorphisms $\cA\rightarrow\overline{\cA}$, $\cA\rightarrow\overline{\cA^\star}$, $\cA\rightarrow\overline{\widetilde{\cA}}$, $\cA\rightarrow\overline{\widetilde{\cA^\star}}$. The group structure of the \textit{extended set of automorphisms}, $\{\Id,\,\star,\,\widetilde{\phantom{cc}},\,\widetilde{\star},\,
\overline{\phantom{cc}},\,\overline{\star},\,
\overline{\widetilde{\phantom{cc}}},\,\overline{\widetilde{\star}}\}$, was studied in \cite{Var03,Var04} in relation to $CPT$ symmetries.}. There is an isomorphism between $\Ext(\C_n)=\{\Id,\,\star,\,\widetilde{\phantom{cc}},\,\widetilde{\star},\,
\overline{\phantom{cc}},\,\overline{\star},\,\overline{\widetilde{\phantom{cc}}},\,\overline{\widetilde{\star}}\}$ and the $CPT$ group of discrete transformations\footnote{It is interesting to note that in the Penrose twistor program \cite{Pen77}, the spinor structure is understood as the underlying (more fundamental, primary) structure with respect to Minkowski space-time. In other words, the space-time continuum is not a fundamental substance in the twistor approach. The continuum is an absolutely derived entity (in the spirit of Leibnitz's relational philosophy) generated by the underlying spinor structure. In this context, the space-time discrete symmetries $P$ and $T$ are projections (shadows) of the fundamental automorphisms of the spinor structure.}. At this point, the inversion of space $P$, the time reversal $T$, the complete reflection $PT$, the charge conjugation $C$, the transformations $CP$, $CT$ and the complete $CPT$ transformation correspond to the automorphism $\cA\rightarrow\cA^\star$, the anti-automorphisms $\cA\rightarrow\widetilde{\cA}$, $\cA\rightarrow\widetilde{\cA^\star}$, the pseudo-automorphisms $\cA\rightarrow\overline{\cA}$, $\cA\rightarrow\overline{\cA^\star}$, the pseudo-anti-automorphisms
$\cA\rightarrow\overline{\widetilde{\cA}}$ and $\cA\rightarrow\overline{\widetilde{\cA^\star}}$, respectively
\cite{Var03,Var04}.

In conclusion it should be noted that the main purpose of this article was the desire to emphasize the fundamental role of two-component spinors. A two-component spinor describes the minimal structural component of matter (a quantum of energy). Identification of the minimal structural component $\left|\mathfrak{q}\right\rangle$ (quantum of energy) with neutrino $\left|\nu\right\rangle$, $\left|\mathfrak{q}\right\rangle\equiv\left|\nu\right\rangle$, would lead to the ``neutrino theory of everything'' and to the understanding of neutrino as the primary element of matter. However, the substance (energy) is not determined by its states. As Spinoza said: ``Substance is by nature the first of its states'' (Theorem 1, Ethics). Substance as a whole is more primary than its states. States have a secondary (subordinate) nature in relation to the whole (substance). This is one of the main principles of holism. The spectrum of states of the substance, which we call energy or matter, contains an infinite number of states, among which one of the most minimal is the neutrino. Substance is not an aggregate constructed from elementary parts, contrary to all the ideas of atomism and reductionism. The quantized nature of the Fermi and Bose states of the matter spectrum consists in the factorization (separability) of cyclic vectors of the $\K$-Hilbert space by means of the tensor product of fundamental states $\left|\mathfrak{q}\right\rangle$ (GNS construction, algebraic quantization). At the same time, the presence of a $\K$-structure splits $\left|\mathfrak{q}\right\rangle$ into two states $\left|\mathfrak{q}_a\right\rangle$ and $\left|\mathfrak{q}_s\right\rangle$.

\end{document}